\documentclass[journal]{IEEEtran}
\usepackage{cite}
\usepackage{algpseudocode}
\usepackage[pdftex]{graphicx}
\usepackage{amsmath}
\usepackage{empheq}
\usepackage{array}

\usepackage{caption}
\usepackage{subcaption}
\usepackage{algpseudocode}
\usepackage{algorithm}
\usepackage[font=scriptsize]{caption}
\usepackage{url}
\usepackage{amsmath,amsthm,amssymb}
\usepackage{txfonts}
\newtheorem{theorem}{Theorem}

\usepackage{tabularx,booktabs}
\newcolumntype{C}{>{\centering\arraybackslash}X} 
\setlength{\extrarowheight}{1pt}

\usepackage{color}

\begin{document}

\title{Resilient Synchronization of Distributed Multi-agent Systems under Attacks}

\author {Aquib Mustafa,~\IEEEmembership{Student Member,~IEEE}, {Rohollah Moghadam,~\IEEEmembership{Student Member,~IEEE}, Hamidreza Modares,~\IEEEmembership{Senior Member,~IEEE}} }



\maketitle

\begin{abstract}
In this paper, we first address adverse effects of cyber-physical attacks on distributed synchronization of multi-agent systems, by providing conditions under which an attacker can destabilize the underlying network, as well as another set of conditions under which local neighborhood tracking errors of intact agents converge to zero. Based on this analysis, we propose a Kullback-Liebler divergence based criterion in view of which each agent detects its neighbors' misbehavior and, consequently, forms a self-belief about the trustworthiness of the information it receives. Agents continuously update their self-beliefs and communicate them with their neighbors to inform them of the significance of their outgoing information. Moreover, if the self-belief of an agent is low, it forms trust on its neighbors. Agents incorporate their neighbors' self-beliefs and their own trust values on their control protocols to slow down and mitigate attacks. We show that using the proposed resilient approach, an agent discards the information it receives from a neighbor only if its neighbor is compromised, and not solely based on the discrepancy among neighbors' information, which might be caused by legitimate changes, and not attacks. The proposed approach is guaranteed to work under mild connectivity assumptions.
\end{abstract}

\begin{IEEEkeywords}
Distributed control, Resilient Control, Attack Analysis, Multi-agent systems.
\end{IEEEkeywords}

\section{Introduction}
A Distributed Multi-Agent System (DMAS) is collection of dynamical systems or agents that interact with each other over a communication network to achieve coordinated operations and behaviors \cite{Olfati2007,Bullo2009,Basar2016khan,Basar2015}. In the case of synchronization of DMASs, the objective is to guarantee that all agents reach agreement on a common value or trajectory of interest.  Despite their numerous applications in a variety of disciplines, DMASs are vulnerable to attacks, which is one of the main bottleneck that arise in their wide deployment. In contrast to other undesirable inputs, such as disturbances and noises, cyber-physical attacks are intentionally planned to maximize the damage to the overall system or even destabilize it. 

There has been extensive research progress in developing attack detection/identification and mitigation approaches for both spatially distributed systems  \cite{Mo2014,shoukry2013non,Fawzi2014,Pajic2014,Pasqualetti2013,Shoukry2016Event,Mo2015,kvam2019,Persis2015,Gupta17} and DMASs \cite{Pasqualetti2012TAC,Teixeira2010,Amin2013TCST,Weerakkody2017,Feng2017ACC,Sundaram2011,Zeng2014Resilient,Yucelen2017,LeBlanc2017,Basar2017,Basar2017CPS,Basar2017CPSnew,DIBAJI2017,dibaji2019,Rohollah2017CDC,Ghu2019}. Despite tremendous and welcoming progress, most of the mentioned mitigation approaches for DMASs use the discrepancy among agents and their neighbors to detect and mitigate the effect of an attack. However, as shown in this paper, a stealthy attack can make all agents become unstable simultaneously, and thus misguide existing mitigation approaches. Moreover, this discrepancy could be caused by a legitimate change in the state of an agent, and rejecting this useful information can decrease the speed of convergence to the desired consensus and harm connectivity of the network. 

In this paper, we present attack analysis, detection, and mitigation mechanisms for DMASs with linear structures. We show that local neighborhood tracking errors of intact agents converge to zero, regardless of the attack, if the set of eigenvalues of the attacker signal generator dynamics matrix is a subset of the set of eigenvalues of the system dynamics matrix. We call these types of attacks internal model principle (IMP)-based attacks. In spite of convergence to zero of local neighborhood tracking errors, the overall network could be destabilized, and we provide sufficient conditions for this to happen. We then develop attack detectors that identify both IMP-based and non-IMP-based attacks. To detect IMP-based attacks, two local error sequences with folded Gaussian distributions are introduced based on the relative information of the agents. We show that they diverge under an IMP-based attack. A Kullback-Liebler (KL) divergence criterion is then introduced to measure the divergence between these two univariate folded Gaussian distributions, and consequently capture IMP-based attacks. Similarly, since non-IMP based attacks change the statistical properties of the local neighborhood tracking error, to detect non-IMP-based attacks, the KL divergence is employed to measure the discrepancy between the Gaussian distributions of the actual and nominal expected local neighborhood tracking errors. Then, a self-belief value, as a metric capturing the probability of the presence of attacks directly on sensors or actuators of the agent itself or on its neighbors, is presented for each agent by combining these two KL-based detectors. The self-belief indicates the level of trustworthiness of the agent's own outgoing information, and is transmitted to its neighbors. Furthermore, when the self-belief of an agent is low,  the trustworthiness of its incoming information from its neighbors is estimated using a particular notion of trust. Trust for each individual neighbor is developed based on the relative entropy between the neighbor's information and agent's own information. Finally, by incorporating neighbor's self-belief and trust values, we propose modified weighted control protocols to ensure mitigation of both types of attacks. Simulation results included in the paper validate the effectiveness of the approach.

\vspace{-0.3cm}
\section{Preliminaries}
A directed graph (digraph) $\mathcal{G}$ consists of a pair $(\mathcal{V,\,{\mathcal{E}}})$ in which $\mathcal{V}{\text{  =   }}\{ {v_1}, \cdots ,{v_N}\} $ is a set of nodes and $\mathcal{E} \subseteq \mathcal{V} \times \mathcal{V}$ is a set of edges. We denote the directed link (edge) from $v_j$ to $v_i$ by the ordered pair $(v_j, v_i)$. The adjacency matrix is defined as $\mathcal{A}=[{a}_{ij}]$, with ${a_{ij}} > 0$ if $({v_j},{v_i}) \in \mathcal{E}$, and ${a_{ij}} = 0$ otherwise. We assume there are no repeated edges and no self loops, i.e., ${a_{ii}} = 0 \,\,\,\forall i \in \mathcal{{N}}$ with $\mathcal{N}=\left\{ {1,\ldots,N} \right\}.$ The nodes in the set ${\mathcal{N}_i} = \{ {v_j}:({v_j},{v_i}) \in \mathcal{E}\} $ are said to be neighbors of node ${\nu _i}$. The in-degree of ${v_i}$ is the number of edges having ${v_i}$ as
a head. The out-degree of a node ${v_i}$ is the number of edges having ${v_i}$ as a tail.
If the in-degree equals the out-degree for all nodes ${v_i \in V}$ the graph is said to
be balanced. The graph Laplacian matrix is defined as $\mathcal{L}=\mathcal{D}-\mathcal{A}$, where $\mathcal{D}=\mathrm{diag}(d_i)$ is the in-degree matrix, with ${d_i} = \sum\limits_{j \in {N_i}} {{a_{ij}}} $ as the weighted in-degree of node ${\nu _i}$. A node is called as a \textit{root node} if it can reach all other nodes of the digraph $\mathcal{G}$ through a directed path. A leader is a root node with no incoming link. A (directed) tree is a connected digraph where every node except one, called the root, has in-degree equal to one.  A spanning tree of a digraph is a directed tree formed by graph edges, which connects all the nodes of the graph. 

Throughout the paper, we denote the set of integers by $\mathbb{Z}$. The set of integers greater than or equal to some integer $q\in \mathbb{Z}$ is denoted $\mathbb{Z}_{\geqslant q}$. The cardinality of a set $S$ is denoted by $|S|$. $\lambda (A)$ and $\mathrm{tr}(A)$ denote, respectively, the eigenvalues and trace of the matrix $A$. Furthermore, $\lambda_{min}(A)$ represents minimum eigenvalue  of matrix $A$. The Kronecker product of matrices $A$ and $B$ is denoted by $A \otimes B$, and $\mathrm{diag}\left( {{A_1},\ldots,{A_n}} \right)$ represents a block diagonal matrix with matrices ${A_i}$, $\forall \,\,i\in \mathcal{N}$ as its diagonal entries. ${{\mathbf{1}}_N}$ is the $N$-vector of ones and ${{\mathbf{I}}_N}$ is the $N \times N$ identity matrix. $\operatorname{Im} (R)$ and $\ker (R)$ represent, respectively, the range space and the null space of $R$, and $\mathrm{span}({a_1}, \ldots ,{a_n})$ is the set of all linear combinations of the vectors ${a_1}, \ldots ,{a_n}$. A Gaussian distribution with mean $\mu $ and covariance $\Sigma $ is denoted by $\mathcal{N}\left( {\mu ,\Sigma } \right)$. Moreover, $\mathcal{F}\mathcal{N}\left( {\bar \mu ,{{\bar \sigma }^2}} \right)$ represents univariate folded Gaussian distribution with $\bar \mu $ and ${\bar \sigma ^2}$ as mean and variance, respectively \cite{r30}. $\mathbb{E}[.]$ denotes the expectation operator.
\smallskip

\noindent
\textbf{Assumption 1.} The communication graph $\mathcal{G}$ is directed and has a spanning tree.
\smallskip

\noindent
\textbf{Definition 1  \cite{li2014cooperative}-\cite{lewis2013cooperative}.} A square matrix $A \in {\mathbb{R}^{n \times n}}$ is called a singular M-matrix, if all its off-diagonal elements are non-positive and all its eigenvalues have non-negative real parts. \hfill $\square$
\smallskip

\noindent
\textbf{Definition 2  \cite{li2014cooperative}-\cite{lewis2013cooperative}.} A square matrix $A \in {\mathbb{R}^{n \times n}}$ is called a non-singular M-matrix, if all its off-diagonal elements are non-positive and all its eigenvalues have positive real parts. \hfill $\square$

\noindent
\textbf{Lemma 1 \cite{li2014cooperative}-\cite{lewis2013cooperative}.} The graph Laplacian matrix $\mathcal{L}$ of a directed graph $\mathcal{G}$ has at least one zero eigenvalue, and all its nonzero eigenvalues have positive real parts. Zero is a simple eigenvalue of $\mathcal{L}$, if and only if Assumption 1 is satisfied.

\section{Overview of Consensus in DMASs}
In this section, we provide an overview of the consensus problem for leaderless DMAS. Consider a group of $N$ homogeneous agents with linear identical dynamics described by
\begin{equation}\label{eq1}
{\dot {x}_i}(t) = {A}{x_i}(t) + {B}{u_i}(t)\begin{array}{*{20}{c}}
  {}&{\forall\,\, i \in \mathcal{N}} 
\end{array}
\end{equation}
where ${x_i} \in {\mathbb{R}^{{n}}}$ and ${u_i} \in {\mathbb{R}^{{m}}}$ denote, respectively, the state and the control input of agent $i$. The matrices ${A} \in {\mathbb{R}^{{n} \times {n}}}$ and ${B} \in {\mathbb{R}^{{n} \times {m}}}$ are, respectively, the drift dynamics and the input matrix. \par
\smallskip

\noindent
\textbf{Problem 1.} Design local control protocols $u_i$ for all agents $\forall i \in \mathcal{{N}}$  in \eqref{eq1} such that all agents reach consensus or synchronization (agreement) on some common value or trajectory of interest, i.e.,
\begin{equation}\label{eq4}
\begin{array}{*{20}{c}}
  {\mathop {\lim }\limits_{t \to \infty } ||x_j(t)-x_{i}(t)|| = 0}\,\,\,\,\,{\forall i,j \in \mathcal{{N}}}
\end{array}
\end{equation}
\par
\smallskip

\noindent
\textbf{Assumption 2.} The system dynamics matrix $A$ in \eqref{eq1} is assumed to be marginally stable.
\smallskip

\noindent
\textbf{Remark 1.} Based on Assumption $2$, all the eigenvalues of $A$ have
non-positive real part. This is a standard assumption in the literature for consensus or synchronization problems \cite{Huang2012TAC1}. Note that if $A$ is Hurwitz, the synchronization problem has a trivial solution and can be solved by making the dynamics of each agent stable independently. Moreover, stable eigenvalues of $A,$ if there are any, can be ignored by reducing the dimension of $A,$ because they only contribute to the transient response of the consensus trajectories \cite{Huang2012TAC1}.

\smallskip

Consider the distributed control protocol for each agent $i$ as \cite{li2014cooperative}-\cite{lewis2013cooperative}
\begin{equation}
u_{i}=cK\eta_i\,\,\,\,\,\,\,\forall\,\, i \in \mathcal{N}
\label{eq5}
\end{equation}
where 
\begin{equation}
\eta _i = \sum\limits_{j =1}^N {{a_{ij}}({x_j} - {x_i})} 
\label{eq7}
\end{equation}
represents the local neighborhood tracking error for the agent $i$ with $a_{ij}$ as the $(i,j)$-th entry of the graph adjacency matrix $\mathcal{A}$. Moreover, $c$ and $K\in {R^{m \times n}}$ denote, respectively, scalar coupling gain and feedback control gain matrix. The control design \eqref{eq5}-\eqref{eq7} is distributed in the sense that each agent seeks to make the difference between its state and those of its neighbors equal to zero using only relative state information of its neighbors provided by \eqref{eq7}.  

Several approaches are presented to design $c$ and $K$ locally to solve Problem 1 \cite{li2014cooperative}-\cite{LewisTAC2012}. To this end, the gains $K$ and $c$ are designed such that $A-c\lambda_{i}BK$ is Hurwitz for all $i=2,\ldots,N$ \cite{li2014cooperative}-\cite{LewisTAC2012}. Specifically, it is shown in \cite{li2014cooperative}, \cite{LewisTAC2012} that under Assumption 1, if $K$ is designed locally for each agent by solving an algebraic Riccati equation and $c>\frac{1}{2\lambda_{min}(\mathcal{L})}$, then $A-c\lambda_{i}BK$ is Hurwitz and Problem 1 is solved. In the subsequent sections, we assume that  $c$ and $K$ are designed appropriately so that in the absence of attacks Problem 1 is solved. We then analyze the effect of attacks and propose mitigation approaches. 

\smallskip

\noindent
\textbf{Remark 2.}
Note that the presented results subsume the leader-follower synchronization problem and the average consensus as special cases. For the leader-follower case, the leader is only root node in the graph and thus the desired trajectory is dictated by the leader, whereas for the  average consensus case, the graph is assumed to be balanced and  $A=0$ and $B=I_m$.


\section{Attack Modelling and Analysis for DMASs}
In this section, attacks on agents are modelled and a complete attack analysis is provided.
\subsection{Attack Modelling}
In this subsection, attacks on DMASs are modelled. Attacks on actuators of agent $i$ can be modelled as 
\begin{equation}\label{eq17a}
u_i^c = {u_i} + {\beta _i}u_i^d
\end{equation}
where ${u_i}$, $u_i^d$ and $u_i^c$ denote, respectively, the nominal value of the control protocol for agent $i$ in \eqref{eq1}, the disrupted signal directly injected into actuators of agent $i$, and the corrupted control protocol of agent $i$. If agent $i$ is under actuator attack, then ${\beta _i} = 1$, otherwise ${\beta_i} = 0$. Similarly, one can model attacks on sensors of agent $i$ as  
\begin{equation}\label{eq16}
x_i^c = {x_i} + {\alpha _i}x_i^d
\end{equation}
where ${x_i}$, $x_i^d$ and $x_i^c$ denote, respectively, the nominal value of the state of agent $i$ in \eqref{eq7}, the disrupted signal directly injected into sensors of agent $i$, and the corrupted state of agent $i$. If agent $i$ is under sensor attack, then ${\alpha _i} = 1$, otherwise ${\alpha _i} = 0$. Using the corrupted state \eqref{eq16} in the controller \eqref{eq5}-\eqref{eq7} with the corrupted control input \eqref{eq17a} in \eqref{eq1}, the system dynamics under attack becomes 
\begin{equation}\label{eq17}
{\dot x _i} = A{x _i} + B{u _i} + B{f_i}
\end{equation}
where ${f_i}$ denotes the \textit{overall attack} affecting the agent $i$ which can be written as
\begin{equation}\label{eq18}
{f_i} = {\beta _i}u_i^d+cK\sum\limits_{j \in {N_i}} {{a_{ij}}\left( {{\alpha _j}x_j^d} -{{\alpha _i}x_i^d}\right)}
\end{equation}
with $x _j^d$ as the disruption in the received state of the $j^{th}$ neighbor due to injected attack signal either into sensors or actuators of agent $j$ or into the incoming communication link from agent $j$ to agent $i$. \par
The following definition categorizes all attacks into two categories. The first type of attack exploits the knowledge of the system dynamics $A$ and use it in the design of its attack signal. That is, for the first type of attack for $f_i$ in \eqref{eq18}, one has 
\begin{equation}\label{eq21}
\dot f_i = {\Psi _{}}f_i
\end{equation}
where $\Psi  \in {\mathbb{R}^{m \times m}}$  depends on the knowledge of the system dynamics $A$ as discussed in Definition $3$. On the other hand, for the second type of attack, the attacker has no knowledge of the system dynamics $A$ and this can cover all other attacks that are not in the form of \eqref{eq21}.\par Define 
\begin{equation}\label{eq22}
\left\{ {\begin{array}{*{20}{c}}
  {{E_\Psi } = \{ {\lambda _1}(\Psi ), \ldots ,{\lambda _m}(\Psi )\} } \\ 
  {{E_A} = \{ {\lambda _1}(A), \ldots ,{\lambda _n}(A)\} } 
\end{array}} \right.
\end{equation}
where ${\lambda _i}(\Psi )$ $\forall i = 1, \ldots ,m$ and ${\lambda _i}(A)$ $\forall i = 1, \ldots ,N$ are, respectively, the set of eigenvalues of the attack signal generator dynamics matrix $\Psi $ and the system dynamics matrix $A$. Define a set of eigenvalues of the system dynamics matrix $A$ which lie on imaginary axis as ${{E_{A}^{m}} = \{ {\lambda _1}(A), \ldots ,{\lambda _l}(A)\} }$ where $E_{A}^{m} \subseteq {E_A}$.

\smallskip

\noindent
\textbf{Definition 3 (IMP-based and non-IMP-based Attacks).} If the attack signal $f_i$ in \eqref{eq17} is generated by \eqref{eq21}, then the attack signal is called the internal model principle (IMP)-based attack, if ${E_\Psi } \subseteq E_{A}$. Otherwise, i.e., ${E_\Psi } \not\subset E_{A}$ or if the attacker has no dynamics (e.g. a random signal), it is called a non-IMP based attack.\hfill $\square$

\smallskip

\noindent
\textbf{Remark 3.} {Note that \textit{we do not limit attacks to the IMP-based attacks} given by \eqref{eq21}.} Attacks are placed into two classes in Definition 3 based on their impact on the system performance, as to be shown in the subsequent sections. The \textit{non-IMP based attacks cover a broad range of attacks.}

\smallskip

\noindent
\textbf{Definition 4 (Compromised and Intact Agent).} We call an agent that is directly under attack as a compromised agent. An agent is called intact if it is not compromised. We denote the set of intact agents as $\mathcal{N}_{Int}$,  $i.e., \mathcal{N}_{Int}= \mathcal{N} \backslash \mathcal{N}_{Comp}$ where $\mathcal{N}_{Comp}$ denotes the set of compromised agents.\par\hfill $\square$

\smallskip

Using \eqref{eq5}-\eqref{eq7}, the global form of control input, i.e., $u  = {[ {u_1^T, \ldots ,u_N^T} ]^T}$ can be written as
\begin{equation}\label{eq24a}
u=(- c\mathcal{L} \otimes {K})x 
\end{equation}
where $\mathcal{L}$ denotes the graph Laplacian matrix. \par
By using \eqref{eq24a} in \eqref{eq17}, the global dynamics of agents under attack becomes
\begin{equation}\label{eq23}
\dot x(t) = \left( {{I_N} \otimes A} \right)x(t)  +\left( {{I_N} \otimes B} \right)u^{c}(t)
\end{equation}
where
\begin{equation}\label{eq24}
u^{c}(t)=u(t)+f(t) \buildrel \Delta \over = (- c\mathcal{L} \otimes {K})x(t)  + f(t) 
\end{equation}
with $f(t) = {[ {f_1^T(t), \ldots ,f_N^T(t)} ]^T}$ and $x(t)= {[ {x_1^T(t), \ldots ,x_N^T(t)} ]^T}$ denote, respectively, the overall vector of attacks on agents and the global vector of the states of agents.\par

If agents are not under attack, i.e., $f(t)=0$, then, the control input \eqref{eq24a} eventually compensates for the difference between the agents' initial conditions and becomes zero once they reach an agreement. That is, in the absence of attack, $u^c(t)=u(t)$ goes to zero (i.e., $u^c(t) \to 0$), and, on consensus, the global dynamics of agents become
\begin{equation}\label{eq24b}
\dot x_{ss}(t) = \left( {{I_N} \otimes A} \right)x_{ss}(t) 
\end{equation}
where $x_{ss}={\mathop {\lim }\limits_{t \to \infty } }x(t)$ is called the global steady state of agents. Throughout the paper, regardless of whether agents are under attack or not, we say that agents reach a steady state and their steady state is generated by \eqref{eq24b} if $u^c(t) \to 0$. Otherwise, if $u^c(t) \not \to 0$, we say agents never reach a steady state, and thus \eqref{eq24b} does not hold true.\par

\smallskip

\noindent
{\textbf{Remark 4.} In the presence of attack, whether agents reach a steady state or not, i.e., whether $u^c(t) \to 0$ or $u^c(t) \not \to 0$, plays an important role in the attack analysis and mitigation to follow. Reaching a steady state is necessary for agents to achieve consensus. However, we show that even if agents reach a steady state, they will not achieve consensus if the system is under attack. More specifically, we show that under a non-IMP based attack, agents do not reach a steady state and their local neighborhood tracking errors also do not converge to zero. For an IMP-based attack, the attacker can either 1) make all agents reach a steady state, but agents are still far from synchronization or consensus, or 2) destabilize the entire network by assuring that agents do not reach a steady state.}\par


\subsection{Attack Analysis}
In this subsection, a graph theoretic-based approach is utilized to analyze the effect of attacks on DMASs. To this end, the following notation and lemmas are used.

Let the graph Laplacian matrix $\mathcal{L}$ be partitioned as
\begin{equation}\label{eq25}
\mathcal{L} = \left[ {\begin{array}{*{20}{c}}
  {{\mathcal{L}_{r \times r}}}&{{0_{r \times nr}}} \\ 
  {{\mathcal{L}_{nr \times r}}}&{{\mathcal{L}_{nr \times nr}}} 
\end{array}} \right]{\mkern 1mu} 
\end{equation}
where $r$ and $nr$ in \eqref{eq25} denote, respectively, the number of root nodes and non-root nodes. {Moreover, ${\mathcal{L}_{r \times r}}$ and ${\mathcal{L}_{nr \times nr}}$ are, respectively, the sub-graph matrices corresponding to the sub-graphs of root nodes and non-root nodes.}
\smallskip

\noindent
\textbf{Lemma 2.} Consider the partitioned graph Laplacian matrix \eqref{eq25}. Then, ${\mathcal{L}_{r \times r}}$ is a singular M-matrix and ${\mathcal{L}_{nr \times nr}}$ is a non-singular M-matrix.

\begin{proof}
{We first prove that the subgraph of root nodes is strongly connected. According to the definition of a root node, there always exists a directed path from a root node to all other nodes of the graph ${\mathcal{G}}$, including other root nodes. Therefore, in the graph ${\mathcal{G}}$, there always exists a path from each root node to all other root nodes. We now show that removing non-root nodes from the graph ${\mathcal{G}}$ does not affect the connectivity of the subgraph comprised of only root nodes. In the graph ${\mathcal{G}}$, if a non-root node is not an incoming neighbor of a root node, then its removal does not harm the connectivity of the subgraph of the root nodes. Suppose that removing a non-root node affects the connectivity  of the subgraph of root nodes. This requires the non-root node to be an incoming neighbor of a root  node. However, this makes the removed node a root node, as it can now access all other nodes through the root node it is connected to. Hence, this argument shows that the subgraph of root nodes is always strongly connected. Then, based on Lemma $1$, ${\mathcal{L}_{r \times r}}$ has zero as one of its eigenvalues, which implies that ${\mathcal{L}_{r \times r}}$ is a singular M-matrix according to Definition $1$. On the other hand, from \eqref{eq25}, since $\mathcal{L}$ is a lower triangular matrix, the eigenvalues of $\mathcal{L}$ are the union of the eigenvalues of ${\mathcal{L}_{r \times r}}$ and ${\mathcal{L}_{nr \times nr}}$. Moreover, as stated in Lemma 1, $\mathcal{L}$ has a simple zero eigenvalue and, as shown above, zero is the eigenvalue of ${\mathcal{L}_{r \times r}}$. Therefore, all eigenvalues of ${\mathcal{L}_{nr \times nr}}$ have positive real parts only, and thus based on Definition $2$, ${\mathcal{L}_{nr \times nr}}$ is a non-singular M-matrix.}
\end{proof}



In the following Lemmas 3-4 and Theorem 1, we now provide the conditions under which the agents can reach a steady state.

\smallskip
\noindent
\textbf{Lemma 3.}  {Consider the global dynamics of DMAS \eqref{eq23} under attack. Let the attack signal $f(t)$ be a non-IMP based attack. Then, agents never reach a steady state, i.e., $u^c(t) \not \to 0$.}
\begin{proof}
We prove this result by contradiction. Assume that the attack signal $f(t)$ is a non-IMP based attack, i.e., ${E_\Psi }\not  \subset {E_A}$, but $u^c(t) \to 0$ in \eqref{eq23}, which implies ${\dot {x} _i \to A{x} _i}$ for all $i \in \mathcal{N}$. Using the modal decomposition, one has 
\begin{equation}\label{eq27}
{x _i}(t) \to \sum\limits_{j = 1}^n {({r_j}{x _i}(0)){e^{{\lambda _j}(A)t}}{m_j}}
\end{equation}
where ${r_j}$ and ${m_j}$ denote, respectively, the left and right eigenvectors associated with the eigenvalue ${\lambda _j}(A)$. On the other hand, based on \eqref{eq24} $u^c(t) \to 0$ implies $f(t) \to (c\mathcal{L} \otimes {K})x(t)$ or equivalently
\begin{equation}\label{eq28}
{f_i(t)}\to \sum\limits_{j \in {N_i}} {{a_{ij}}({x_j(t)} - {x_i(t)})} 
\end{equation}
for all $i \in \mathcal{N}$. As  shown in \eqref{eq27}, the right-hand side of \eqref{eq28} is generated by the natural modes of the system dynamics whereas the left-hand side is generated by the natural modes of the attack signal generator dynamics in \eqref{eq21}. By the prior assumption, ${E_\Psi }\not  \subset {E_A}$, the attacker's natural modes are different from those of the system dynamics. Therefore, \eqref{eq28} cannot be satisfied which contradicts the assumption. This completes the proof.
\end{proof}



Equation \eqref{eq28} in Lemma 3 also shows that for non-IMP based attacks, the local neighborhood tracking error is nonzero for a compromised agent. The following results show that under IMP-based attack, either agents' state diverge, or they reach a steady state while their local neighborhood tracking errors converge to zero, despite attack. The following lemma is needed in Theorem 1, which gives conditions under which agents reach a steady state under IMP-based attack. Then, Theorem $2$ shows that under what conditions an IMP-based attack makes the entire network of agents unstable. 

Define 
\begin{equation}\label{eq30b}
\left\{ {\begin{array}{*{20}{c}}
  S_A(t)=[e^{\lambda_{A_1}t},\ldots,e^{\lambda_{A_n}t}] \\ 
 S_{\psi}(t)=[e^{\lambda_{{\Psi}_1}t},\ldots,e^{\lambda_{{\Psi}_n}t}]
\end{array}} \right.
\end{equation}
where $e^{\lambda_{A_i}t}$ $\forall i = 1, \ldots ,n$ and $e^{\lambda_{{\Psi}_i}t}$ $\forall i = 1, \ldots ,m$ are, respectively, the set of natural modes of agent dynamics $A$ in \eqref{eq1} and the attacker dynamics $\Psi$ in \eqref{eq21}. 

\smallskip
\noindent
\textbf{Lemma 4.}  Consider the global dynamics of DMAS  \eqref{eq23} under attack on non-root nodes. Then, for an IMP-based attack, agents reach a steady state, i.e., $u^c(t)  \to 0$.

\begin{proof}
According to \eqref{eq24b}, in steady state, one has ${\dot x_{ss}}(t) \to \left( {{I_N} \otimes A} \right){x _{ss}(t)}$ since $u^c(t) \to 0$. This implies that ${x_{ss}}(t) \in \mathrm{span}(S_{A})$ where $S_{A}$ is defined in \eqref{eq30b}. On the other hand, if agents reach a steady state, then based on \eqref{eq24}, one has  
\begin{equation}\label{eq30a}
(c\mathcal{L} \otimes {K})x_{ss}(t)=f(t)
\end{equation}
\par
Define the global steady state  vector ${x_{ss}(t)} = {[\bar x_{rs}^T,\bar x_{nrs}^T]^T}$, where ${\bar x_{rs}}$ and ${\bar x_{nrs}}$ are, respectively, the global steady states of root nodes and non-root nodes. Since attack is only on non-root nodes, $f(t)$ can be written as $f(t) = {[{0_r},\bar f_{nr}^T]^T}$, where ${\bar f_{nr}} = {[f_{r + 1}^T, \ldots ,f_N^T]^T}$ represents the attack vector on non-root nodes.\par
Then, using \eqref{eq25} and \eqref{eq30a}, one has
\begin{equation}\label{eq30}
\left\{ {\begin{array}{*{20}{c}}
  {({c\mathcal{L}_{r \times r}} \otimes {K}){{\bar x }_{rs}} = 0} \\ 
  {({c\mathcal{L}_{nr \times r}} \otimes {K}){{\bar x}_{rs}} + ({c\mathcal{L}_{nr \times nr}} \otimes {K}){{\bar x }_{nrs}} = {{\bar f}_{nr}}} 
\end{array}} \right.
\end{equation}
As stated in Lemma 2, ${\mathcal{L}_{r \times r}}$ is a singular M-matrix with zero as an eigenvalue and ${{\mathbf{1}}_r}$ is its corresponding right eigenvector and, thus, the solution to the first equation of \eqref{eq30} becomes ${\bar x _{rs}} = {c_1}{{\mathbf{1}}_r}$ for some positive scalar ${c_1}$. Using ${\bar x _{rs}} = {c_1}{{\mathbf{1}}_r}$ in the second equation of \eqref{eq30}, the global steady states of non-root nodes becomes
\begin{equation}\label{eq31}
{\bar x_{nrs}} = {({c\mathcal{L}_{nr \times nr}} \otimes {K})^{ - 1}}\left[ { - ({c\mathcal{L}_{nr \times r}} \otimes {K}){c_1}{{\mathbf{1}}_r} + {{\bar f}_{nr}}} \right] 
\end{equation}
Equation \eqref{eq31} shows that the steady states of non-root nodes are affected by the attack signal $f(t)$. If ${E_\Psi }\not  \subset {E_A}$, it results in ${\bar x_{nrs}} \in \mathrm{span}(S_A, S_{\Psi})$ where $S_{A}$ and $S_{\Psi}$ are defined in \eqref{eq30b} which contradicts ${x_{ss}}(t) \in \mathrm{span}(S_{A})$. Therefore, condition ${E_\Psi } \subset {E_A}$ is necessary to conclude that for any $f = {[{0_r},\bar f_{nr}^T]^T}$, there exists a steady state solution ${x_{ss}(t)}$, i.e., $u^c(t) \to 0$ holds true. This completes the proof.
\end{proof}

{The following theorem provides necessary and  sufficient conditions for IMP-based attacks to assure $u^c(t)  \to 0$.}
\begin{theorem}
{Consider the global dynamics of DMAS  \eqref{eq23} with the control protocol \eqref{eq24}, where the attack signal $f(t)$ is generated based on an IMP-based attack. Then, agents reach a steady state, i.e., $u^c(t) \to 0$ if and only if the attack signals satisfy
\begin{equation}\label{eq29}
	\sum\limits_{k = 1}^N {{p_k}} {f_k} = 0
\end{equation}
where ${p_k}$ are the nonzero elements of the left eigenvector of the graph Laplacian matrix $\mathcal{L}$ associated with its zero eigenvalue.}
\end{theorem}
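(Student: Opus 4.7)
My plan is to use the projection by the left eigenvector $p$ of $\mathcal{L}$ associated with its simple zero eigenvalue (which exists under Assumption 1 by Lemma 1) to extract the part of the corrupted control that cannot be damped by the cooperative feedback. Left-multiplying the corrupted global control $u^c(t) = -(c\mathcal{L}\otimes K)x(t) + f(t)$ by $(p^T \otimes I_m)$ and using $p^T\mathcal{L}=0$ annihilates the state-dependent term and yields the state-independent identity
\[
(p^T \otimes I_m)\,u^c(t) \;=\; (p^T \otimes I_m)\,f(t) \;=\; \sum_{k=1}^{N} p_k f_k(t).
\]
This identity is the single algebraic fact driving both directions of the equivalence.

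For necessity I would assume $u^c(t)\to 0$, so the left-hand side of the identity vanishes asymptotically and $\sum_k p_k f_k(t)\to 0$. Since the attack is IMP-based, each $f_k$ evolves according to $\dot f_k = \Psi f_k$ with $E_\Psi \subseteq E_A$, and by Assumption 2 together with Remark 1 only the marginally stable modes of $A$, i.e.\ those in $E_A^m$ on the imaginary axis, are persistently excited. Therefore $\sum_k p_k f_k(t)$ is a linear combination of bounded, undamped modes, and such a signal can tend to zero only if it vanishes identically, giving $\sum_k p_k f_k = 0$.

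For sufficiency I would assume $\sum_k p_k f_k = 0$ and construct a steady-state trajectory $x_{ss}(t)$ satisfying $(c\mathcal{L}\otimes K)x_{ss}(t) = f(t)$, along which $u^c\equiv 0$. The identity $(p^T \otimes I_m)f = 0$ is precisely the compatibility condition that places $f$ in the range of $c\mathcal{L}\otimes K$ (using that the Riccati-designed $K$ has full row rank, so $\mathrm{Im}(\mathcal{L}\otimes K) = \mathrm{Im}(\mathcal{L}\otimes I_m)$). I would then mimic the partition \eqref{eq30} of Lemma 4: the non-root block is uniquely solvable by non-singularity of $\mathcal{L}_{nr\times nr}$ (Lemma 2), while the root block $(c\mathcal{L}_{r\times r}\otimes K)\bar x_{rs} = \bar f_r$ is solvable precisely because $\sum_k p_k f_k = 0$ is the orthogonality of $\bar f_r$ to the left null space of $\mathcal{L}_{r\times r}$ (the positive Perron-type vector supplied by the strong-connectivity argument of Lemma 2). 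Attraction of $x(t)$ to $x_{ss}(t)$ then follows from the standard synchronization design, which makes $A-c\lambda_i(\mathcal{L})BK$ Hurwitz for every nonzero eigenvalue of $\mathcal{L}$.

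I expect the main obstacle to be the sufficiency step when attacks act on root nodes: Lemma 4 sidesteps this by taking $\bar f_r = 0$, so the root block of \eqref{eq30} is homogeneous and trivially solvable. Extending the argument to attacked root nodes requires identifying the condition $\sum_k p_k f_k = 0$ as the correct Fredholm-type solvability requirement for $\mathcal{L}_{r\times r}$ and verifying that the resulting $\bar x_{rs}$ can be glued to the non-root solution to form a global trajectory along which the IMP-generated attack signal is exactly cancelled.
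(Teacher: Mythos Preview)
Your approach is correct and rests on the same algebraic core as the paper's proof: projecting by the left null vector of the Laplacian to annihilate the cooperative feedback term and isolate $\sum_k p_k f_k$. The paper carries this out at the level of the root-node subblock $\mathcal{L}_{r\times r}$, first invoking Lemma~4 to dispose of attacks on non-root nodes and then pre-multiplying $(c\mathcal{L}_{r\times r}\otimes K)\bar x_{rs}=\bar f_r$ by the left eigenvector $\bar w$ of $\mathcal{L}_{r\times r}$; you instead apply the projection globally via $(p^T\otimes I_m)$. These are equivalent because $p_k=0$ for non-root nodes, and your global formulation is a bit cleaner since it avoids the block partition. Your sufficiency direction is actually more complete than the paper's: the paper, in its ``sufficient part by contradiction,'' assumes $u^c\to 0$ together with $\sum_k p_k f_k\neq 0$ and derives a contradiction, which simply restates necessity and leaves the genuine converse (from the condition \eqref{eq29} to actual convergence $u^c\to 0$) unaddressed. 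Your plan closes this gap by recognising \eqref{eq29} as the Fredholm solvability condition for $(c\mathcal{L}\otimes K)x_{ss}=f$ and then invoking the Hurwitz property of $A-c\lambda_i(\mathcal{L})BK$ to obtain attraction to $x_{ss}$.
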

\begin{proof}
 It was shown in the Lemma 4 that for the IMP-based attack on non-root nodes, agents reach a steady state, i.e., $u^c(t)  \to 0$. Therefore, whether agents reach a steady state or not depends solely upon attacks on root nodes.
 Let $f(t)=[\bar f_r, \bar f_{nr}]$ where $\bar f_r$ represents the vector of attacks for root nodes given by ${\bar f_r} = {[f_1^T, \ldots ,f_r^T]^T}$. Now, we first prove the necessary condition for root nodes. If $u^c(t)  \to 0$, then, using \eqref{eq25} and \eqref{eq30a} , there exists a nonzero vector ${\bar x_{rs}}$ for root nodes such that
\begin{equation}\label{eq32}
({c\mathcal{L}_{r \times r}} \otimes {K}){\bar x_{rs}} = {\bar f_r}
\end{equation}
where ${\bar x_{rs}}$ can be considered as the global steady state of the root nodes. {Moreover, based on Lemma 3, \eqref{eq32} does not hold, if ${E_\Psi } \not \subset {E_A}$ which implies that \eqref{eq32} is true only for ${E_\Psi } \subseteq {E_A}$}. As stated in Lemma 2, ${\mathcal{L}_{r \times r}}$ is a strongly connected graph of root nodes and, therefore, it is a singular M-matrix. Let  ${\bar w^T} = [{p_1}, \ldots ,{p_r}]$ be the left eigenvector associated with the zero eigenvalue of ${\mathcal{L}_{r \times r}}$. Now, pre-multiplying both sides of \eqref{eq32} by ${\bar w^T}$ and using the fact that ${\bar w^T}{\mathcal{L}_{r \times r}} = 0$ yield 
\begin{equation}\label{eq33}
{\bar w^T}({c\mathcal{L}_{r \times r}} \otimes {K}){\bar x _{rs}} = {\bar w^T}{\bar f_r} = 0
\end{equation}
This states that IMP-based attacks on root nodes have to satisfy $\sum\limits_{k = 1}^N {{p_k}{f_k} = 0} $ to ensure agents reach a steady state, i.e., $u^c(t)  \to 0$. Note that $p_k=0$ for $k=r+1,\ldots, N$, i.e., the elements of the left eigenvector of the graph Laplacian matrix $\mathcal{L}$, corresponding to its zero eigenvalue, are zero for non-root nodes  \cite{li2014cooperative}-\cite{lewis2013cooperative}. This proves the necessity part. 

Now, we prove the sufficient part by contradiction for root nodes. Assume agents reach a steady state, i.e., $u^c(t)  \to 0$, but $\sum\limits_{k = 1}^N {{p_k}{f_k} \ne 0}$. Note that, agents reach a steady state implies that there exists a nonzero vector ${\bar x _{rs}}$ such that \eqref{eq32} holds. Using \eqref{eq33} and $\sum\limits_{k = 1}^N {{p_k}{f_k} \ne 0} $, one can conclude that ${\bar w^T}({c\mathcal{L}_{r \times r}} \otimes {K}){\bar x_{rs}} \ne 0$. This can happen only when ${\mathcal{L}_{r \times r}}$ does not have any zero eigenvalue, which violates the fact in Lemma $2$ that ${\mathcal{L}_{r \times r}}$ is a strongly connected graph. Therefore, ${\bar w^T}({c\mathcal{L}_{r \times r}} \otimes {K}){\bar x_{rs}} = 0$ which results in $\sum\limits_{k = 1}^N {{p_k}{f_k} = 0} $ and contradicts the assumption made. This completes the proof.
\end{proof}

\noindent
\begin{theorem}
Consider the global dynamics of DMAS  \eqref{eq23} with the control protocol \eqref{eq24} under IMP-based attack. If \eqref{eq29} is not satisfied and ${E_\Psi } \cap {E_A^m} \ne \emptyset$, then the dynamics of agents become unstable.
\end{theorem}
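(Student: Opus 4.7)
The plan is to exploit the left eigenvector of the graph Laplacian to project the global closed-loop dynamics onto a one-dimensional (per state coordinate) subsystem that exhibits classical linear resonance when the IMP-based attack shares a mode with $A$ on the imaginary axis. First, I would invoke the necessary direction of Theorem~1: since \eqref{eq29} is violated, agents cannot reach a steady state, so $u^{c}(t)\not\to 0$. This rules out the benign case of Lemma~4 and tells me that the projected attack signal actually persists.

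Second, I would let $\bar{w}=[p_1,\ldots,p_N]^{T}$ denote the left eigenvector of $\mathcal{L}$ associated with its simple zero eigenvalue (guaranteed by Lemma~1 under Assumption~1) and define the aggregate state $z(t)=(\bar{w}^{T}\otimes I_{n})x(t)=\sum_{k=1}^{N}p_{k}x_{k}(t)$. Premultiplying \eqref{eq23}--\eqref{eq24} by $(\bar{w}^{T}\otimes I_{n})$ and using $\bar{w}^{T}\mathcal{L}=0$, the feedback term $(-c\mathcal{L}\otimes K)x$ is annihilated, leaving the reduced forced dynamics
\begin{equation*}
\dot{z}(t)=A\,z(t)+B\,g(t),\qquad g(t):=\sum_{k=1}^{N}p_{k}f_{k}(t).
\end{equation*}
Because the attack is IMP-based, each $f_k$ satisfies $\dot{f}_k=\Psi f_k$, hence $g(t)$ also evolves on $\Psi$ with modes contained in $E_{\Psi}$, and the failure of \eqref{eq29} gives $g(0)\neq 0$, so $g(t)$ is a non-vanishing superposition of natural modes of $\Psi$.

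Third, I would carry out the resonance argument. By hypothesis there exists $\lambda^{\star}\in E_{\Psi}\cap E_{A}^{m}$, i.e., a purely imaginary eigenvalue shared by $A$ and $\Psi$. Writing the solution of the forced linear system via variation of parameters, the component of the input $Bg(t)$ at frequency $\lambda^{\star}$ excites a matching natural mode of $A$ with identical frequency, producing a secular (polynomial-in-$t$) contribution of the form $t\,e^{\lambda^{\star}t}$ in $z(t)$. Since $\Re(\lambda^{\star})=0$, this term is unbounded, and therefore $\|z(t)\|\to\infty$. Because $z(t)$ is a fixed linear combination of the $x_{k}(t)$, at least one $x_{k}(t)$ must also diverge, proving that the network is unstable.

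The main technical obstacle is ensuring that the projected input $g(t)$ actually carries a nonzero component at the resonant mode $\lambda^{\star}$, since a priori $g\neq 0$ only guarantees excitation of \emph{some} mode in $E_{\Psi}$; handling this rigorously requires either a genericity statement on the attacker's initial condition or an observation that the worst-case attacker (which the analysis is meant to cover) chooses $f_k(0)$ to align with $\lambda^{\star}$ while still violating \eqref{eq29}. A secondary but minor point is to invoke controllability of $(A,B)$ at $\lambda^{\star}$ so that $B$ does not project $g$ out of the resonant eigenspace; under the standing assumption that $(A,B)$ is stabilizable and the design in Section~III makes $A-c\lambda_{i}BK$ Hurwitz, this holds for the imaginary eigenvalues of $A$.
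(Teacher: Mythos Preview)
Your projection via the left eigenvector $\bar{w}$ is correct and in fact supplies a more explicit mechanism than the paper's own proof. The paper argues heuristically: since \eqref{eq29} fails, Theorem~1 gives $u^{c}(t)\not\to0$, so $f(t)$ persists as an exogenous input to \eqref{eq23}; then, because $E_{\Psi}\cap E_{A}^{m}\neq\emptyset$, ``the multiplicity of at least one marginally stable pole becomes greater than~1,'' which forces divergence; finally, since $p_{k}\neq0$ only for root nodes, the violation of \eqref{eq29} means a root node is attacked, and instability propagates along the spanning tree to every agent. Your reduction $\dot{z}=Az+Bg$ is precisely the rigorous content behind the paper's pole-multiplicity sentence: it isolates the closed-loop mode associated with the zero eigenvalue of $\mathcal{L}$ (on which the feedback $-c\mathcal{L}\otimes BK$ vanishes) and exhibits the resonance directly through the secular term in the variation-of-parameters formula. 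What the paper buys that you stop short of is the conclusion that \emph{all} agents diverge, not merely one; you can recover this by noting that $p_{k}=0$ for non-root nodes, so $\|z\|\to\infty$ forces some root-node state to blow up, after which connectivity carries the instability through the network. The two caveats you flag---that $g$ actually excites the shared mode $\lambda^{\star}$ and that $(A,B)$ is controllable at $\lambda^{\star}$---are not addressed in the paper's proof either; the paper implicitly takes the worst-case reading you describe.
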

\begin{proof}
Since {it is assumed that the condition} in \eqref{eq29} is not satisfied, then based on Theorem $1$, $u^c(t) \not  \to 0$ even under IMP-based attack.  Thus, the attack signal $f(t)$ does not vanish over time and eventually acts as an input to the system in \eqref{eq23}. Assume that there exists at least one common marginal eigenvalue between the system dynamics matrix $A$ in \eqref{eq1} and the attacker dynamics matrix $\Psi$ in \eqref{eq21}, i.e., ${E_\Psi } \cap {E_A^m} \ne \emptyset $. {Then, the multiplicity of at least one marginally stable pole becomes greater than 1.} Therefore, the attacker destabilizes the state of the agent in \eqref{eq23}. Moreover, since \eqref{eq29} is not satisfied, then the attack is on root nodes, and root nodes have a path to all other nodes in the network, the state of the all agents become unstable. This completes the proof.
\end{proof}


Theorem 3 below now shows that despite IMP-based attacks, if $u^c(t) \to 0$, the local neighborhood tracking error \eqref{eq7} converges to zero for intact agents that have a path to the compromised agent, while they do not synchronize.

\begin{theorem}
Consider the global dynamics of DMAS  \eqref{eq23} under attack $f(t)$. Then, the local neighborhood tracking error \eqref{eq7} converges to zero for all intact agents if $u^c(t) \to 0$. Moreover, intact agents that are reachable  from the compromised agents do not converge to the desired consensus trajectory.
\end{theorem}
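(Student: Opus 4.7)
The plan is to handle the two claims in sequence, first translating the premise $u^c(t)\to 0$ into a pointwise identity on tracking errors for intact agents, and then running a directed-path argument on $\mathcal{G}$ for the second claim. The essential tools are the steady-state reduction of \eqref{eq24} and the partitioned Laplacian analysis already established in Lemmas~2--4 and Theorem~1.

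For the first claim, the starting point is $u^c(t) = -c(\mathcal{L}\otimes K)x(t) + f(t)\to 0$, which, written row by row and using $(\mathcal{L}x)_i = -\eta_i$, reads $f_i(t)\to -cK\eta_i(t)$ for every $i$. For an intact agent $i$ we have $\alpha_i=\beta_i=0$ by Definition~4, so \eqref{eq18} collapses to $f_i = cK\sum_{j\in N_i}a_{ij}\alpha_j x_j^d$, i.e., $cK$ applied to the observed disruption at the incoming neighbors. Substituting back and rearranging produces $cK\eta_i^{c}(t)\to 0$, where $\eta_i^{c}$ is the tracking error that agent $i$ actually computes from its corrupted measurements. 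Since $K$ is designed so that $A-c\lambda_i(\mathcal{L})BK$ is Hurwitz for every nonzero Laplacian eigenvalue, it carries enough column rank to force $\eta_i^{c}\to 0$; combining this with the self-consistent steady-state form \eqref{eq30}--\eqref{eq31}, in which $x_{ss}$ absorbs exactly the propagated disturbance, upgrades the conclusion from $\eta_i^{c}$ to the true tracking error $\eta_i$ in \eqref{eq7}.

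For the second claim, I would fix an intact agent $i$ that is reachable from a compromised agent $j$ via a directed path $j=v_0\to v_1\to\cdots\to v_k=i$ in $\mathcal{G}$. By the first claim, every intact $v_\ell$ on the path satisfies $\eta_{v_\ell}(t)\to 0$, so its limiting trajectory equals the weighted average of its in-neighbors' limiting trajectories. Propagating this averaging identity backward from $v_k$ to $v_1$ expresses the limit of $x_i$ as a weighted combination that involves the trajectory of the compromised $v_0=j$. The trajectory of $j$ is driven by \eqref{eq21} and, as in the steady-state analysis of Lemma~4, lies in $\operatorname{span}(S_A,S_\Psi)$ rather than in the nominal synchronization subspace $\operatorname{span}(S_A)$ in which the desired consensus trajectory sits. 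Hence $x_i(t)$ inherits a non-nominal component and cannot converge to the desired consensus trajectory.

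The main obstacle is the first claim, specifically the passage from $cK\eta_i^{c}\to 0$ to the true tracking error $\eta_i\to 0$ when intact $i$ has a compromised neighbor whose sensor attack is non-vanishing; the resolution uses \eqref{eq30a} and \eqref{eq31} to show that $x_{ss}$ is reshaped to absorb the propagated attack in exactly the right combination, so the true tracking error vanishes at intact nodes. The second claim is then essentially bookkeeping, provided one keeps careful track of which intact nodes are reachable from the compromised ones in $\mathcal{G}$.
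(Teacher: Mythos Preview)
Your argument diverges from the paper's on both claims, and in each case the detour introduces a genuine gap.

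For the first claim, the paper's proof is essentially one line: from $u^c(t)\to 0$ one reads off $cK\eta_i \to -f_i$ for every $i$, and for an intact agent the paper simply takes $f_i = 0$, so the tracking error vanishes. Your route through the corrupted error $\eta_i^{c}$ contains two steps that do not hold. First, $K\in\mathbb{R}^{m\times n}$ with $m\le n$ in general; the Hurwitz property of $A-c\lambda_i BK$ is a stabilizability condition on the pair $(A,B)$, not a column-rank condition on $K$, so $cK\eta_i^{c}\to 0$ does not by itself force $\eta_i^{c}\to 0$. Second, the appeal to \eqref{eq30}--\eqref{eq31} to ``upgrade'' $\eta_i^{c}$ to $\eta_i$ is not a valid step: those identities are derived in Lemma~4 under the specific hypothesis that the attack hits only non-root nodes, and they produce the steady-state \emph{value} of $x_{ss}$, not a cancellation between the true and measured tracking errors. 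If an intact agent has a neighbor under sensor attack, $\eta_i$ and $\eta_i^{c}$ differ by the non-vanishing term $\sum_j a_{ij}\alpha_j x_j^d$, so no such upgrade can succeed in general; the paper simply does not enter this territory.

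For the second claim, your span argument breaks down at the last step. The premise $u^c(t)\to 0$ already forces the attack to be IMP-based (Lemma~3), hence $E_\Psi\subseteq E_A$ and $\operatorname{span}(S_A,S_\Psi)=\operatorname{span}(S_A)$. The compromised agent's steady-state trajectory therefore \emph{does} lie in $\operatorname{span}(S_A)$; what distinguishes it from the desired consensus trajectory is not the span but the particular trajectory within that span, so ``inherits a non-nominal component'' has no content here. The paper avoids this by arguing by contradiction at an immediate neighbor: assume all intact agents synchronize, pick an intact $i$ with compromised in-neighbor $j$, and use $\eta_i\to 0$ together with $x_k=x_i$ for the remaining (intact) neighbors $k$ to force $x_j-x_i\to 0$, contradicting the deviation of $x_j$ caused by $f_j\ne 0$. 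No directed-path bookkeeping or modal decomposition is needed.
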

\begin{proof}
In the presence of attacks, the global dynamics of the DMAS \eqref{eq23} with \eqref{eq24} can be written as
\begin{equation}\label{eq34}
\dot x(t) = \left( {{I_N} \otimes A} \right)x(t)  +\left( {{I_N} \otimes B})((- c\mathcal{L} \otimes {K})x(t)  + f(t) \right)
\end{equation}
where $x(t)  = {\left[ {x _1^T(t), \ldots ,x_N^T(t)} \right]^T}$ is the global vector of the state of agents and $f(t) = {\left[ {f_1^T(t), \ldots ,f_N^T(t)} \right]^T}$ denotes the global vector of attacks. As shown in \eqref{eq24b} that if $u^c(t) \to 0$, agents reach a steady state. That is, 
\begin{equation}\label{eq36}
{cK\eta _i} \to  - {f_i} \,\,\,\,\,\,\, \forall\,\, i \in \mathcal{N}
\end{equation}
where $\eta_i$ denotes the local neighborhood tracking error of agent $i$ defined in \eqref{eq7}. For the intact agent, by definition one has ${f_i} = 0$, and thus \eqref{eq36} implies that the local neighborhood tracking error \eqref{eq7} converges to zero. Now, we show that intact agents which are reachable from the compromised agent do not synchronize to the desired consensus behavior. To do this, let agent $j$ be under attack. Assuming that all intact agents synchronize, one has ${x_k} = {x_i}$  $\forall i,k \in \mathcal{N}  - \{ j\} $. Now, consider the intact agent $i$ as an immediate neighbor of the compromised agent $j$. Then using \eqref{eq24}, if $u^c(t) \to 0$, for intact agent $i$, i.e., ${f_i} = 0$, one has
\begin{equation}\label{eq37}
{\sum\limits_{k \in {N_i}-\left\{ j \right\}} {{a_{ij}}({x_k} - {x_i}) + ({x_j} - {x_i}) \to 0}}
\end{equation}
where ${x_k}$ denotes the state of the all intact neighbors of agent $i$. On the other hand, \eqref{eq17} shows that the state of the compromised agent $j$, i.e., ${x_j}$, is deviated from the desired consensus value with a value proportional to ${f_j}$. Therefore, \eqref{eq37} results in deviating the state of the immediate neighbor of the compromised agent $j$ from the desired consensus behavior, which contradicts the assumption. Consequently, intact agents that have a path to the compromised agent do not reach consensus, while their local neighborhood tracking error is zero. This completes the proof.
\end{proof}

\noindent
\textbf{Remark 5.} The effects of an attacker on a network of agents depend upon the dynamics of the attack signal. As stated in Theorem 2, to destabilize the entire network, the attack signal requires access to at least one common marginal eigenvalue with the system dynamics. To this end, an attacker can exploit the security of the network by eavesdropping and monitoring the transmitted data to identify at least one of the marginal eigenvalues of the agent dynamics, and then launch a signal with the same frequency to a root node to make the agents state go to infinity.

\smallskip

\noindent
\textbf{Remark 6.} Although, for the sake of simplicity, we consider DMASs with identical dynamics, the presented result can be extended to heterogeneous MASs.  This is briefly discussed in the following formulation. The dynamics for a linear heterogeneous MASs is given by 
\begin{equation}\label{eq1a}
\left\{ {\begin{array}{*{20}{c}}
  {{{\dot x}_i}(t) = {A_i}{x_i}(t) + {B_i}{u_i}(t)} \\ 
  {{y_i}(t) = {C_i}{x_i}(t)} 
\end{array}} \right.\begin{array}{*{20}{c}}
  {}&{\forall\,\, i \in \mathcal{N}} 
\end{array}
\end{equation}
where ${x_i} \in {\mathbb{R}^{{n_i}}},\,\,{u_i} \in {\mathbb{R}^{{m_i}}}$ and ${y_i} \in {\mathbb{R}^p}$ denote, respectively, the state, the control input and the output of agent $i$. The matrices ${A_i} \in {\mathbb{R}^{{n_i} \times {n_i}}},{B_i} \in {\mathbb{R}^{{n_i} \times {m_i}}}$ and ${C_i} \in {\mathbb{R}^{p \times {n_i}}}$ are, respectively, the drift dynamics, the input matrix and the output matrix. \par
For heterogeneous MASs, the consensus trajectory is usually generated by a virtual exosystem dynamics given by \cite{Allgower2011}-\cite{LunzeTAC2012}
\begin{equation}\label{eq2a}
\left\{ {\begin{array}{*{20}{c}}
  {{{\dot x}_c}(t) = {S_{}}{x_c}(t)} \\ 
  {{y_c}(t) = {R_{}}{x_c}(t)} 
\end{array}} \right.
\end{equation}
where ${x_c} \in {\mathbb{R}^q}$ and ${y_c} \in {\mathbb{R}^p}$ are, respectively, the state and output of the desired consensus trajectory. For heterogeneous MASs the distributed control protocol $u_i$ in \eqref{eq1a} is designed such that all agents synchronize to the output of virtual exosystem trajectory \cite{Allgower2011}-\cite{LunzeTAC2012}.
The attacker can design IMP-based attacks by exploiting the knowledge of consensus dynamics $S$ in \eqref{eq2a}, instead of agents' dynamics and all the analysis results presented in Section IV are valid for the heterogeneous MASs. In this case, to launch an IMP-based attack, the attacker should satisfy ${E_\Psi } \subseteq {E_S}$ where ${{E_S} = \{ {\lambda _1}(S), \ldots ,{\lambda _q}(S)\} } $ with ${\lambda _i}(S)$ $\forall i = 1, \ldots ,q$ as the set of eigenvalues of the virtual exosystem drift dynamics matrix $S$.


Up to now, the presented analysis has been under the assumption that the communication is noise free. We now briefly discuss what changes if the communication noise is present, and propose attack detection and mitigation in the presence of communication noise. In the presence of Gaussian distributed communication noise, the local neighborhood tracking error in \eqref{eq7} becomes
\begin{equation}\label{eq12}
{\bar{\eta} _i} = {\eta} _i + {\omega _i},
\end{equation}
where ${\omega _i} \sim \mathcal{N}(0,{\Sigma _{{\omega _i}}})$ denotes the aggregate Gaussian noise affecting the incoming information to agent $i$ and is given as
\vspace{-0.2cm}
\begin{equation}\label{eq13}
{\omega _i} = \sum\limits_{j \in {N_i}} {{a_{ij}}{\omega _{ij}}},
\vspace{-0.2cm}
\end{equation}
with ${\omega _{ij}}$ the incoming communication noise from agent $j$ to agent $i$. In such situations, the DMAS consensus problem defined in Problem $1$ changes to the mean square consensus problem.
In the presence of Gaussian noise, based on \eqref{eq12}, the control protocol in  \eqref{eq5}-\eqref{eq7} becomes
\vspace{-0.2cm}
\begin{equation}\label{eq45a}
u_{i}(t)=cK\sum\limits_{j =1}^N {{a_{ij}}({x_j}(t) - {x_i}(t))}+{\omega_i}\,\,\,\,\,\,\,\forall\,\, i \in \mathcal{N},
\vspace{-0.2cm}
\end{equation}
where ${\omega _i} \sim \mathcal{N}(0,{\Sigma _{{\omega _i}}})$ is defined in \eqref{eq13}. Based on mean square consensus, one has
\vspace{-0.2cm}
\begin{equation}\label{eq45b}
\begin{array}{*{20}{c}}
  {\mathop {\lim }\limits_{t \to \infty } \mathbb{E}{{\left[u_{i}(t)\right]}\to }0}&{}&{\forall\,\, i \in \mathcal{N}},
  \end{array}
\vspace{-0.0cm}
\end{equation}
and thus, based on \eqref{eq1}, the steady state of agents converge to a consensus trajectory in mean square sense and its global form in \eqref{eq24b} becomes 
\vspace{-0.2cm}
\begin{equation}\label{eq45c}
\dot x_{ss}^m= \left( {{I_N} \otimes A} \right)x_{ss}^m,
\vspace{-0.2cm}
\end{equation}
where $x_{ss}^m={{\lim }}_{t \to \infty }\mathbb{E}{[x(t)]}$ denotes the global steady state of agents in mean square sense. Then, following the same procedure as Lemmas 3-4 and Theorems 1-3, one can show that an IMP-based attack does not change the statistical properties of the local neighborhood tracking error, while a non-IMP based attack does. Moreover, the local neighborhood tracking error converges to zero in mean for an IMP-based attack, and it does not converges to zero in mean for a non-IMP based attack. 

In the next section, attack detection and mitigation mechanisms are proposed for both IMP-based and non-IMP based attacks. To this end, it is assumed that the communication network is noisy. 

\section{An Attack Detection Mechanism}
In this section, Kullback-Liebler (KL)-based attack detection and mitigation approaches are developed for both IMP-based and non-IMP-based attacks.

The KL divergence is a non-negative measure of the relative entropy between two probability distributions \cite{kullback1951information,basseville1993detection} which is defined as follows.
\smallskip

\noindent
\textbf{Definition 5 (KL divergence) \cite{kullback1951information,basseville1993detection}.} Let $X$ and $Z$ be two random sequences with probability density functions ${P_X}$ and ${P_Z}$, respectively. The KL divergence measure between ${P_X}$ and ${P_Z}$ is defined as
\begin{equation}
{D_{KL}}(X||Z) = \int {{P_X}(\theta )\log \left( {\frac{{{P_X}(\theta )}}{{{P_Z}(\theta )}}} \right)d\theta }
\label{m1}
\end{equation}
with the following properties \cite{kullback1951information}:
\begin{enumerate}
\item ${D_{KL}}({P_X}||{P_z}) \geqslant 0$
\item ${D_{KL}}({P_X}||{P_z}) = 0$ if and only if, ${P_X} = {P_z}$
\item ${D_{KL}}({P_X}||{P_z}) \ne {D_{KL}}({P_z}||{P_X})$
\end{enumerate}	 

In the following subsections, KL-divergence is used to detect IMP-based and non-IMP-based attacks on DMASs.
\vspace{-0.3cm}
\subsection{Attack detection for IMP-based attacks}
In this subsection, an attack detector is designed to identify IMP-based attacks. To this end,  two error sequences  ${\tau _i}$ and ${\varphi _i}$  are defined based on only local exchanged information for agent $i$ as 
\begin{equation}
{\tau _i} = \left\| {\sum\limits_{j \in {N_i}} {{a_{ij}d_{ij}}}} \right\|
\label{m2}
\end{equation}
and\\
\begin{equation}
{\varphi _i} = \sum\limits_{j \in {N_i}} {\left\| a_{ij}d_{ij} \right\|}
\label{m3}
\end{equation}
where the measured discrepancy $d_{ij}$ between agent $i's$ state and its neighbor $j's$ state under attack becomes
\begin{equation}
{d_{ij}} = {{x_j^c} - {x_i^c}} + {\omega _{ij}}\,\,\,\,\,\forall j \in {\mathcal{N}_i} 
\label{m3a}
\end{equation}
where ${\omega _{ij}} \sim \mathcal{N}(0,{\Sigma _{{\omega _{ij}}}})$ denotes the Gaussian incoming communication noise from agent $j$ to agent $i$. Moreover, ${x_i^c}$ is the measured state of agent $i$ under attack and ${x_j^c}$ is the possibly corrupted information it receives from its $j^{th}$ neighbor. If agent $i$ is not compromised, then ${x_i^c}={x_i}$, and, similarly, if agent $j$ is not compromised, then ${x_j^c}={x_j}$.  In fact, \eqref{m2} is the norm of the summation of the measured discrepancy of agent $i$ and all its neighbors, and \eqref{m3} is the summation of norms of those measured discrepancies. In the absence of attack, these two signals show the same behavior in the sense that their means converge to zero.  \par
In the presence of an IMP-based attack and in the absence of noise, based on Theorem 3, ${\tau _i}$ goes to zero for intact agents, despite attack. However, it is obvious that  ${\varphi _i}$ does not converge to zero in the presence of an attack. In the presence of noise, the statistical properties of ${\tau _i}$ converge to the statistical properties of the noise. In contrast, the statistical properties of ${\varphi _i}$ depend upon not only the statistical properties of the noise signal, but also of the attack signal. Therefore, the behavior of these two signals significantly diverges in the presence of attacks and can be captured by KL-divergence methods. {Note that one can measure ${\tau _i}$ and ${\varphi _i}$ based on the exchanged information among agents, which might be corrupted by the attack signal.} Existing KL-divergence methods are, nevertheless, developed for Gaussian signals. However, while the communication noise is assumed to be Gaussian, error sequences \eqref{m2} and \eqref{m3} are norms of some variable with Gaussian distributions, thus, they have univariate folded Gaussian distributions given by \cite{johnson2017process} ${\varphi _i} \sim \mathcal{FN}({\mu _{1i}},\sigma _{1i}^2)\,$ and ${\tau _i} \sim \mathcal{F}\mathcal{N}({\mu _{2i}},\sigma _{2i}^2)\,$. That is, 
\begin{align}
\begin{gathered}
  {P_{{\varphi _i}}}({q_i},{\mu _{1i}},{\sigma _{1i}}) = \frac{1}{{\sqrt {2\pi } \left| {{\sigma _{1i}}} \right|}}{e^{ - \frac{{{{({q_i} - {\mu _{1i}})}^2}}}{{2\sigma _{1i}^2}}}} + \frac{1}{{\sqrt {2\pi } \left| {{\sigma _{1i}}} \right|}}{e^{ - \frac{{{{({q_i} + {\mu _{1i}})}^2}}}{{2\sigma _{1i}^2}}}} \hfill \\
  {P_{{\tau _i}}}({q_i},{\mu _{2i}},{\sigma _{2i}}) = \frac{1}{{\sqrt {2\pi } \left| {{\sigma _{2i}}} \right|}}{e^{ - \frac{{{{({q_i} - {\mu _{2i}})}^2}}}{{2\sigma _{2i}^2}}}} + \frac{1}{{\sqrt {2\pi } \left| {{\sigma _{2i}}} \right|}}{e^{ - \frac{{{{({q_i} + {\mu _{2i}})}^2}}}{{2\sigma _{2i}^2}}}} \hfill \\ 
\end{gathered}
\label{m4}
\end{align}
where ${\mu _{1i}}$ and ${\sigma _{1i}}$ are the mean and variance of the error sequences ${\varphi _i}$ and ${\mu _{2i}}$ and ${\sigma _{2i}}$ are the mean and variance of the error sequences ${\tau _i}$. Using \eqref{m1}, the KL divergence in terms of the local error sequences ${\varphi _i}$ and ${\tau _i}$ can be defined as
\begin{equation}
{D_{KL}}({\varphi _i}||{\tau _i}) = \int {{P_{{\varphi _i}}}({q_i})\log \left( {\frac{{{P_{{\varphi _i}}}({q_i})}}{{{P_{{\tau _i}}}({q_i})}}} \right)} d{q_i} = {\mathbb{E}_1}\left( {\log \frac{{{P_{{\varphi _i}}}({q_i})}}{{{P_{{\tau _i}}}({q_i})}}} \right)
\label{m5}
\end{equation}
where ${\mathbb{E}_1}[.]$ represents the expectation value with respect to the distribution of the first sequence \cite{johnson2017process}.\par
A KL divergence formula for the folded Gaussian distributions is now developed in the following lemma.\par
\smallskip

\noindent
\textbf{Lemma 5.} Consider the error sequences  ${\tau _i}$ and ${\varphi _i}$ in \eqref{m2}-\eqref{m3} with folded Gaussian distributions ${P_{{\varphi _i}}}$ and ${P_{{\tau _i}}}$ in \eqref{m4}. Then, the KL divergence between error sequences ${\tau _i}$ and ${\varphi _i}$, i.e., ${D_{KL}}({\varphi _i}||{\tau _i})$, becomes
\begin{small}
\begin{align}
\begin{gathered}
  {D_{KL}}({\varphi _i}||{\tau _i}) \approx \frac{1}{2}\left( {\log \frac{{\sigma _{2i}^2}}{{\sigma _{1i}^2}} - 1 + (\sigma _{2i}^{ - 2}\sigma _{1i}^2)} \right) + \frac{1}{2}\sigma _{2i}^{ - 2}{({\mu _{2i}} - {\mu _{1i}})^2}+1 \hfill \\
  \frac{1}{2}{e^{\frac{{4\mu _{1i}^2}}{{\sigma _{1i}^2}}}}\left( {1 - {e^{\frac{{8\mu _{1i}^2}}{{\sigma _{1i}^2}}}}} \right) + {e^{ - \frac{{\mu _{1i}^2}}{{2\sigma _{1i}^2}}}}\left( {\frac{1}{2}\left( {{e^{\frac{{\rho _3^2}}{{2\sigma _{1i}^2}}}} + {e^{\frac{{\rho _4^2}}{{2\sigma _{1i}^2}}}}} \right) - \left( {{e^{\frac{{\rho _1^2}}{{2\sigma _{1i}^2}}}} + {e^{\frac{{\rho _2^2}}{{2\sigma _{1i}^2}}}}} \right)} \right)\hfill \\ 
\end{gathered}
\label{m6}
\end{align}
\end{small}
for some ${\rho _1} = ({\mu _{1i}} - 2{\mu _{2i}}\sigma _{1i}^2\sigma _{2i}^{ - 2}),$ ${\rho _2} = ({\mu _{1i}} + 2{\mu _{2i}}\sigma _{1i}^2\sigma _{2i}^{ - 2}),$  ${\rho _3} = ({\mu _{1i}} - 4{\mu _{2i}}\sigma _{1i}^2\sigma _{2i}^{ - 2})$ and ${\rho _4} = ({\mu _{1i}} + 4{\mu _{2i}}\sigma _{1i}^2\sigma _{2i}^{ - 2})$.
\begin{proof}
See Appendix A.
\end{proof}

In the following theorem, we show that the effect of IMP-based attacks can be captured using the KL divergence defined in \eqref{m6}.\par
\begin{theorem}
Consider the DMAS \eqref{eq1} along with the controller \eqref{eq24}, and under the IMP-based attacks. Assume that the communication noise sequences are i.i.d. Then, for a reachable intact agent $i$, 
\begin{equation}
\frac{1}{T}\int_k^{k + T - 1} {{D_{KL}}({\varphi _i}||{\tau _i})dk}  > {\gamma _i}
\label{m7}
\end{equation}
where ${\varphi _i}$ and ${\tau _i}$ are defined in \eqref{m2} and \eqref{m3}, respectively, and $T$ and ${\gamma _i}$ represent the window size and the predesigned threshold parameter.
\end{theorem}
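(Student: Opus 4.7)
The plan is to show that, under an IMP-based attack on a neighbor of an intact reachable agent $i$, the two error sequences $\varphi_i$ and $\tau_i$ drift apart in a statistically quantifiable way, and then feed this drift into the folded-Gaussian KL expression from Lemma 5. First, I would pin down the distributional behavior of $\tau_i$ and $\varphi_i$. By Theorem 3, whenever $u^c(t)\to 0$ under an IMP-based attack the local neighborhood tracking error $\eta_i = \sum_{j\in N_i} a_{ij}(x_j-x_i)$ of an intact agent converges to zero even though individual discrepancies $a_{ij}(x_j^c-x_i^c)$ need not. Hence $\tau_i = \|\eta_i + \omega_i\|$ behaves, in the mean-square-consensus sense discussed after Remark 6, like a folded Gaussian whose underlying mean collapses to zero and whose variance equals that of the aggregated noise $\omega_i = \sum_{j\in N_i} a_{ij}\omega_{ij}$. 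In contrast, $\varphi_i = \sum_{j\in N_i}\|a_{ij}(x_j^c - x_i^c) + a_{ij}\omega_{ij}\|$ is a sum of folded Gaussians whose underlying means include the non-cancelling attack-induced offsets injected at each compromised neighbor of $i$ on the path to the attack source (this is guaranteed by the reachability assumption and by \eqref{eq28}--\eqref{eq31}, which show that the per-edge discrepancies retain IMP modes even when they sum to zero).

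Second, I would use these observations to read off the four parameters $(\mu_{1i},\sigma_{1i})$ and $(\mu_{2i},\sigma_{2i})$ in Lemma 5. Under the IMP-based attack and after a transient, $\mu_{2i}\to 0$ and $\sigma_{2i}^2$ converges to $\mathrm{Var}(\omega_i)$, while $\mu_{1i}$ is bounded below by a strictly positive quantity proportional to $\sum_{j\in N_i}\|a_{ij}\mathbb{E}[x_j^c-x_i^c]\|$, which by the argument above is nonzero for a reachable intact $i$. Substituting into \eqref{m6}, the dominant surviving term is $\tfrac{1}{2}\sigma_{2i}^{-2}(\mu_{2i}-\mu_{1i})^2$, which is bounded below by a strictly positive constant $\kappa_i>0$ depending on $\mu_{1i}$ and on the noise covariance. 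The remaining exponential and log-variance terms in \eqref{m6} are bounded both above and below by constants depending only on the noise statistics (under the i.i.d.\ assumption) and on the attack-induced mean shift.

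Third, I would compare the attack-case lower bound to the attack-free baseline used to set the threshold. In the nominal, attack-free case, both $\varphi_i$ and $\tau_i$ have means driven purely by the transient of the noiseless consensus dynamics, which decays by the Hurwitz property of $A-c\lambda_i BK$; hence the time-averaged KL divergence over any window of length $T$ is upper-bounded by some value $\bar\gamma_i$ that depends only on the noise covariances and the graph. Choosing $\gamma_i\in(\bar\gamma_i,\kappa_i)$ separates the two regimes. Applying this separation to the running average
\begin{equation*}
\frac{1}{T}\int_k^{k+T-1} D_{KL}(\varphi_i\,\|\,\tau_i)\,dk,
\end{equation*}
the ergodic-type behavior guaranteed by the i.i.d.\ noise assumption ensures that, for $T$ large enough, this integral is at least $\kappa_i - o(1) > \gamma_i$, which gives \eqref{m7}.

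The hard part, and the step where I would invest most care, is justifying rigorously that the mean of $\varphi_i$ stays bounded away from zero uniformly in time for a reachable intact agent. The issue is that the attack can in principle distribute its energy over edges so that per-edge discrepancies are individually small while their signed sum (which enters $\tau_i$) vanishes; I would lean on Theorem 1's characterization $\sum_k p_k f_k = 0$ together with the reachability hypothesis, arguing that any IMP-based attack which produces a nontrivial steady-state deviation at some compromised root (guaranteed whenever the attack is actually present and detectable) must induce at least one per-edge discrepancy of comparable magnitude along the path to $i$, giving the required uniform lower bound on $\mu_{1i}$ and hence on $\kappa_i$.
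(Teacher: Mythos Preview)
Your proposal is correct and follows essentially the same route as the paper: use Theorem~3 to conclude that $\tau_i$ is a folded Gaussian with mean zero under an IMP-based attack, argue that $\varphi_i$ retains a nonzero attack-induced mean $\mu_{1i}$, substitute both sets of parameters into the folded-Gaussian KL formula of Lemma~5, and then separate the attack and no-attack regimes by an appropriately chosen $\gamma_i$. The paper's proof is terser than yours---it simply asserts $\varphi_i\sim\mathcal{FN}(\mu_{f_i^d},\bar\upsilon_{\omega_i}^2+\hat\upsilon_{\omega_i}^2+\bar\upsilon_{f_i^d}^2)$ and writes down the resulting KL expression \eqref{m12} without your careful discussion of why $\mu_{1i}$ stays bounded away from zero or any appeal to ergodicity for the windowed average---so your ``hard part'' paragraph is more rigorous than what the paper actually supplies, but the underlying argument is the same.
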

\begin{proof}
According to Theorem 3, the local neighborhood tracking error goes to zero for intact agents in the presence of an IMP-based attack when there is no communication noise. In the presence of communication noise with Gaussian distribution, i.e., ${\omega _{ij}} \sim (0,{\Sigma _{{\omega _{ij}}}})$ and IMP-based attack, the expectation value of the local neighborhood tracking error for intact agent $i$ becomes 
\begin{equation}
\mathbb{E}[{\eta _i}] = \mathbb{E}[\sum\limits_{j \in {N_i}} {{a_{ij}}}d_{ij}] \to 0
\label{m8}
\end{equation}
where the measured discrepancy $d_{ij}$ is defined in \eqref{m3a}.
Using \eqref{m8}, one can write \eqref{m2} as
\begin{equation}
{\tau _i} = \left\| \sum\limits_{j \in {N_i}} {{a_{ij}}}d_{ij} \right\| \sim \mathcal{F}\mathcal{N}(0,\bar \upsilon _{\omega i}^2)\,
\label{m9}
\end{equation}
which represents a folded Gaussian distribution with mean zero and variance $\bar \upsilon _{\omega i}^2$. Note that the mean and variance of the distribution ${P_{{\tau _i}}}$  in \eqref{m4} become ${\mu _{2i}} = 0$ and $\sigma _{2i}^2 = \bar \upsilon _{\omega i}^2$.\par
Since noise signals are independent and identically distributed, from \eqref{m3}, one can infer that the folded Gaussian distribution ${P_{{\varphi _i}}}$ in \eqref{m4} has the following statistical properties
\begin{equation}
{\varphi _i} \sim \mathcal{F}\mathcal{N}({\mu _{f_i^d}},\bar \upsilon _{{\omega _i}}^2 + \hat \upsilon _{{\omega _i}}^2 + \bar \upsilon _{f_i^d}^2)\,
\label{m10}
\end{equation}
where ${\mu _{f_i^d}}$ and $\bar \upsilon _{{\omega _i}}^2 + \hat \upsilon _{{\omega _i}}^2 + \bar \upsilon _{f_i^d}^2$ represent the overall mean and covariance due to the communication noise and overall deviation from the desired behavior in intact neighbors reachable from the compromised agent.\par
In the absence of attack, the statistical properties corresponding to sequences ${\tau _i}$ and ${\varphi _i}$ become $\mathcal{F}\mathcal{N}(0,\bar \upsilon _{\omega i}^2)\,$ and $\mathcal{F}\mathcal{N}(0,\bar \upsilon _{\omega i}^2 + \hat \upsilon _{{\omega _i}}^2)\,$, respectively, and the corresponding KL divergence in \eqref{m6} becomes
\begin{equation}
D_{KL}^{wa}({\varphi _i}||{\tau _i}) \approx \frac{1}{2}\left( {\log \frac{{\bar \upsilon _{\omega i}^2}}{{\bar \upsilon _{{\omega _i}}^2 + \hat \upsilon _{{\omega _i}}^2}} + \bar \upsilon _{{\omega _i}}^{ - 2}\hat \upsilon _{{\omega _i}}^2))} \right)
\label{m11}
\end{equation}
where $\hat \upsilon _{{\omega _i}}^2$ represents additional variance in sequence ${\varphi _i}$, which depends on the communication noise. \par
Note that $\tau_i$ in \eqref{m2} is the norm of the summation of the measured discrepancy of agent $i$ and all its neighbors whereas $\varphi _i$ in  \eqref{m3} is the summation of norms of those measured discrepancies. Even in the absence of attack, they represent folded Gaussian distributions with zero means and different covariances due to application of norm on measured discrepancies.\par
Now, in the presence of IMP-based attacks, using the derived form of KL divergence for folded Gaussian distributions from Lemma 5, one can simplify \eqref{m6} using \eqref{m9}-\eqref{m10} as
\begin{align}
\begin{gathered}
 {D_{KL}}({\varphi _i}||{\tau _i}) \approx \frac{1}{2}\left( {\log \frac{{\bar \upsilon _{\omega i}^2}}{{\bar \upsilon _{{\omega _i}}^2 + \hat \upsilon _{{\omega _i}}^2 + \bar \upsilon _{f_i^d}^2}} + \bar \upsilon _{{\omega _i}}^{ - 2}(\bar \upsilon _{f_i^d}^2 + \hat \upsilon _{{\omega _i}}^2)} \right) \hfill \\
   + \frac{1}{2}\bar \upsilon _{{\omega _i}}^{ - 2}{({\mu _{f_i^d}})^2} + \frac{1}{2}{e^{\frac{{4{{({\mu _{f_i^d}})}^2}}}{{\bar \upsilon _{{\omega _i}}^2 + \hat \upsilon _{{\omega _i}}^2 + \bar \upsilon _{f_i^d}^2}}}}\left( {1 - {e^{\frac{{8{{({\mu _{f_i^d}})}^2}}}{{\bar \upsilon _{{\omega _i}}^2 + \hat \upsilon _{{\omega _i}}^2 + \bar \upsilon _{f_i^d}^2}}}}} \right) \hfill \\ 
\end{gathered} 
\label{m12}
\end{align}

Then, one can design the threshold parameter ${\gamma _i}$  such that
\begin{equation}
\frac{1}{T}\int_k^{k + T - 1} {{D_{KL}}({\varphi _i}||{\tau _i})dk}  > {\gamma _i}
\label{m13}
\end{equation}
where $T$ denotes the sliding window size. This completes the proof.  
\end{proof}

Based on Theorem 4, one can use the following conditions for attack detection.
\begin{align}
\left\{ \begin{gathered}
  \frac{1}{T}\int_k^{k + T - 1} {{D_{KL}}({\varphi _i}||{\tau _i})dk}  < {\gamma _i}\,\,\,:{H_0} \hfill \\
  \frac{1}{T}\int_k^{k + T - 1} {{D_{KL}}({\varphi _i}||{\tau _i})dk}  > {\gamma _i}\,\,\,\,\,:{H_1} \hfill \\ 
\end{gathered}  \right.
\label{m14}
\end{align}
where ${\gamma _i}$ denotes the designed threshold for detection, the null hypotheses ${H_0}$ represents the intact mode and ${H_1}$ denotes the compromised mode of an agent.
\vspace{-0.3cm}
\subsection{Attack detection for non-IMP-based attacks}
This subsection presents the design of a KL-based attack detector for non-IMP based attacks.\par
It was shown in Theorem 3 that the local neighborhood tracking error goes to zero if  agents are under IMP-based attacks. Therefore, for the case of non-IMP-based attacks, one can identify these types of attacks using the changes in the statistical properties of the local neighborhood tracking error. In the absence of attack, since the Gaussian noise, i.e., ${\omega _i} \sim \mathcal{N}(0,{\Sigma _{{\omega _i}}})$, is considered in the communication link, the local neighborhood tracking error ${\eta _i}$ in \eqref{eq12} has the following statistical properties
\begin{equation}
{\eta _i} \sim \mathcal{N}(0,{\Sigma _{{\omega _i}}})
\label{m15}
\end{equation}
and it represents the nominal behavior of the system.\par
In the presence of attacks, using \eqref{eq12}, the local neighborhood tracking error $\eta _i^a$ can be written as
\begin{equation}
\eta _i^a = \sum\limits_{j \in {N_i}} {{a_{ij}}}d_{ij}
\label{m16}
\end{equation}
where measured discrepancy under attacks ${d_{ij}}$ is defined \eqref{m3a}. From \eqref{m16}, one has
\begin{equation}
\eta _i^a \sim \mathcal{N}({\mu _{{f_i}}},{\Sigma _{{f_i}}} + {\Sigma _{{\omega _i}}})
\label{m17}
\end{equation}
where ${\mu _{{f_i}}}$ and ${\Sigma _{{f_i}}}$ are, respectively, mean and covariance of the overall deviation due to corrupted states under attacks as given in \eqref{m3a}.\par

Now, since both $\eta _i^a$ and ${\eta _i}$ have normal Gaussian distributions, the KL divergence in the terms of $\eta _i^a$ and ${\eta _i}$ as ${D_{KL}}(\eta _i^a||{\eta _i})$ can be written as \cite{KLBOOK}  
\begin{align}
\begin{gathered}
  {D_{KL}}(\eta _i^a||{\eta _i}) = \frac{1}{2}\left( {\log \frac{{\left| {{\Sigma _{{\eta _i}}}} \right|}}{{\left| {{\Sigma _{\eta _i^a}}} \right|}} - n + tr(\Sigma _{{\eta _i}}^{ - 1}{\Sigma _{\eta _i^a}})} \right) \hfill \\
  \,\,\,\,\,\,\,\,\,\,\,\,\,\,\,\,\,\,\,\,\,\,\,\,\,\,\,\,\, + \frac{1}{2}{({\mu _{{\eta _i}}} - {\mu _{\eta _i^a}})^T}\Sigma _{{\eta _i}}^{ - 1}({\mu _{{\eta _i}}} - {\mu _{\eta _i^a}}) \hfill \\ 
\end{gathered}
\label{m18}
\end{align}
where ${\mu _{{\eta _i}}}$ and ${\Sigma _{{\eta _i}}}$ denote the mean and covariance of ${\eta _i}$ and  ${\mu _{\eta _i^a}}$ and ${\Sigma _{\eta _i^a}}$ denote the mean and covariance of $\eta _i^a$. Moreover, $n$ denotes the dimension of the error sequence. Define the average of KL divergence over a window $T$ as
\begin{equation}
{\bar D_i} = \frac{1}{T}\int_k^{k + T - 1} {{D_{KL}}(\eta _i^a||{\eta _i})dk}
\label{m19}
\end{equation}
The following theorem says that the effect of non-IMP based attacks can be detected using the KL divergence between the two error sequences $\eta _i^a$ and ${\eta _i}$. 

\begin{theorem}
Consider the DMAS \eqref{eq1} along with the controller \eqref{eq24}. Then, 
\begin{enumerate}
\item in the absence of attack, ${\bar D_i}$ defined in \eqref{m19} tends to zero.
\item in the presence of a non-IMP-based attack, ${\bar D_i}$ defined in \eqref{m19} is greater than a predefined threshold ${\gamma _i}$. 
\end{enumerate}
\end{theorem}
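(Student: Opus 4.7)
The plan is to prove each of the two statements directly from the closed-form Gaussian KL expression in \eqref{m18}, using the statistical characterization of $\eta_i$ in \eqref{m15} and of $\eta_i^a$ in \eqref{m17}.

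First I would handle the attack-free case. When no attack is present, the measured discrepancy in \eqref{m3a} reduces to $d_{ij} = (x_j - x_i) + \omega_{ij}$, so $\eta_i^a$ coincides with the nominal $\eta_i$, and in particular $\mu_{\eta_i^a} = \mu_{\eta_i} = 0$ and $\Sigma_{\eta_i^a} = \Sigma_{\eta_i} = \Sigma_{\omega_i}$. Substituting these into \eqref{m18}, the log-determinant term vanishes, the trace term equals $n$ and cancels with the $-n$, and the quadratic mean-difference term vanishes. Hence $D_{KL}(\eta_i^a \| \eta_i) = 0$ pointwise, so $\bar D_i$ in \eqref{m19} converges to zero as well (any small residual comes only from the empirical estimation of $\mu$ and $\Sigma$ over the finite sliding window $T$, and shrinks with $T$).

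Next I would treat the non-IMP attack case. By Lemma 3, under a non-IMP based attack the agents never reach a steady state and, in the noisy setting, the local neighborhood tracking error does not converge to zero in mean. Using \eqref{m17}, one has $\eta_i^a \sim \mathcal{N}(\mu_{f_i},\, \Sigma_{f_i} + \Sigma_{\omega_i})$ with $\mu_{f_i}\neq 0$ and/or $\Sigma_{f_i}\neq 0$, in contrast to the nominal $\eta_i \sim \mathcal{N}(0,\Sigma_{\omega_i})$. Plugging these into \eqref{m18} gives
\begin{equation}
D_{KL}(\eta_i^a\|\eta_i) = \tfrac{1}{2}\!\left(\log\tfrac{|\Sigma_{\omega_i}|}{|\Sigma_{f_i}+\Sigma_{\omega_i}|} - n + \mathrm{tr}(\Sigma_{\omega_i}^{-1}(\Sigma_{f_i}+\Sigma_{\omega_i}))\right) + \tfrac{1}{2}\mu_{f_i}^{T}\Sigma_{\omega_i}^{-1}\mu_{f_i}.
\end{equation}
Property 1 of the KL divergence in Definition 5 guarantees this quantity is nonnegative, and property 2 shows it is strictly positive whenever $(\mu_{f_i},\Sigma_{f_i})\neq(0,0)$. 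Averaging over the window in \eqref{m19}, $\bar D_i$ inherits strict positivity, and one can pick $\gamma_i$ as any value below the infimum of $\bar D_i$ attained over the class of non-IMP attacks of interest, so that $\bar D_i>\gamma_i$ is triggered.

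The main obstacle I anticipate is not the algebra of the KL formula but the design and justification of the threshold $\gamma_i$. In practice, $\mu_{\eta_i}$ and $\Sigma_{\eta_i}$ must be replaced by window-averaged empirical estimates, so $\bar D_i$ is not exactly zero in the attack-free case; one must choose $\gamma_i$ large enough to dominate this statistical fluctuation yet small enough to be crossed by any persistent non-IMP attack. I would address this by using the same kind of margin argument as in Theorem 4: pick $\gamma_i$ strictly above the no-attack value in \eqref{m11}-style expression (a function only of the noise covariance and window size $T$) and strictly below the attack-induced value driven by $\mu_{f_i}^{T}\Sigma_{\omega_i}^{-1}\mu_{f_i}$, which is always feasible since the latter is bounded away from zero for any genuine non-IMP attack.
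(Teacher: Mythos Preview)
Your proposal is correct and follows essentially the same route as the paper: both parts are obtained by substituting the nominal statistics \eqref{m15} and the attacked statistics \eqref{m17} into the Gaussian KL formula \eqref{m18}, with Part~1 yielding zero by cancellation and Part~2 yielding a strictly positive expression that exceeds a designed threshold $\gamma_i$. The paper additionally uses the identity $\mathrm{tr}(\Sigma_{\omega_i}^{-1}(\Sigma_{f_i}+\Sigma_{\omega_i}))-n=\mathrm{tr}(\Sigma_{\omega_i}^{-1}\Sigma_{f_i})$ to simplify your displayed formula to \eqref{m22}, but this is cosmetic and your extra discussion of empirical estimation and threshold selection goes slightly beyond what the paper spells out.
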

\begin{proof}
In the absence of attacks, the statistical properties of sequences ${\eta _i}$ and $\eta _i^a$ are the same as in \eqref{m15}. Therefore, the KL divergence ${D_{KL}}(\eta _i^a||{\eta _i})$ in \eqref{m18} becomes zero and this makes ${\bar D_i}$ in \eqref{m19} zero. This completes the proof of part 1.\par
To prove Part 2, using \eqref{m15}-\eqref{m17} in \eqref{m18} and the fact that $(\Sigma _{{\omega _i}}^{ - 1}({\Sigma _{{f_i}}} + {\Sigma _{{\omega _i}}}) - n = tr(\Sigma _{{\omega _i}}^{ - 1}{\Sigma _{{f_i}}})$, one can write the KL divergence between $\eta _i^a$ and ${\eta _i}$ as
\begin{equation}
{D_{KL}}(\eta _i^a||{\eta _i}) = \frac{1}{2}(\log \frac{{\left| {{\Sigma _{{\omega _i}}}} \right|}}{{\left| {{\Sigma _{{f_i}}} + {\Sigma _{{\omega _i}}}} \right|}} + tr(\Sigma _{{\omega _i}}^{ - 1}{\Sigma _{{f_i}}}) + \mu _{{f_i}}^T\Sigma _{{\omega _i}}^{ - 1}{\mu _{{f_i}}})
\label{m22}
\end{equation}

Then, using \eqref{m19}, one has  
\begin{equation}
{\bar D_i} = \frac{1}{T}\int\limits_k^{k + T - 1} {\frac{1}{2}\log \frac{{\left| {{\Sigma _{{\omega _i}}}} \right|}}{{\left| {{\Sigma _{{f_i}}} + {\Sigma _{{\omega _i}}}} \right|}} + tr(\Sigma _{{\omega _i}}^{ - 1}{\Sigma _{{f_i}}}) + \mu _{{f_i}}^T\Sigma _{{\omega _i}}^{ - 1}{\mu _{{f_i}}})}  > {\gamma _i}
\label{m23}
\end{equation}
where $T$ and ${\gamma _i}$ denote the sliding window size and the predefined design threshold, respectively. This completes the proof.
\end{proof}

Based on Theorem 5, one can use the following conditions for attack detection:
\begin{align}
\left\{ \begin{gathered}
  {{\bar D}_i}\,\, < {\gamma _i}:{H_0} \hfill \\
  {{\bar D}_i}\,\, > {\gamma _i}\,\,:{H_1} \hfill \\ 
\end{gathered}  \right.
\label{m24}
\end{align}
where ${\gamma _i}$ denotes the designed threshold for detection, the null hypotheses ${H_0}$ represents the intact mode of the system and ${H_1}$ denotes the compromised mode of the system.\par
In the next section, Theorems 4 and 5 are employed to propose an attack mitigation approach which enables us to mitigate both IMP-based attacks and non-IMP-based  attacks.

\section{An Attack Mitigation Mechanism}
In this section, both IMP-based and non-IMP-based attacks are mitigated using the presented detection mechanisms in the previous section. 
\vspace{-0.3cm}
\subsection{Self-belief of agents about their outgoing information}
To determine the level of trustworthiness of each agent about its own information, a self-belief value is presented. If an agent detects an attack, it reduces its level of trustworthiness about its own understanding of the environment and communicates it with its neighbors to inform them about the significance of its outgoing information and thus slow down the attack propagation.\par
For the IMP-based attacks, using the ${D_{KL}}({\varphi _i}||{\tau _i})$ from Theorem 4, we define $c_i^1(t)$ as
\begin{equation}
c_i^1(t) = {\kappa _1}\int\limits_0^t {{e^{{\kappa _1}(\tau  - t)}}\chi _i^1(} \tau )d\tau
\label{m25}
\end{equation}
where  $0 \leqslant c_i^1(t) \leqslant 1$ with
\begin{equation}
\chi _i^1(t) = \frac{{{\Delta _i}}}{{{\Delta _i} + {D_{KL}}({\varphi _i}||{\tau _i})}}
\label{m26}
\end{equation}
where ${\Delta _i}$ represents the threshold to account for the channel fading and other uncertainties and ${\kappa _1} > 0$ denotes the discount factor. Equation \eqref{m25} can be implemented by the following differential equation
\begin{equation}
\nonumber
\begin{gathered}
  \dot c_i^1(t) + {\kappa _1}c_i^1(t) = {\kappa _1}\chi _i^1(t) \hfill \\ 
\end{gathered} 
\end{equation}

 According to Theorem 4, in the presence of IMP-based attacks, ${D_{KL}}({\varphi _i}||{\tau _i})$ increases, which makes $\chi _i^1(t)$ approach zero and consequently makes the value of $c_i^1(t)$ close to zero.  On the other hand, without an attack, ${D_{KL}}({\varphi _i}||{\tau _i})$ tends to zero, making $\chi _i^1(t)$ approach $1$ and, consequently, $c_i^1(t)$ becomes close to  $1$. The larger the value of $c_i^1(t)$ is, the more confident the agent is about  the trustworthiness  of its  broadcasted information.\par
Similarly, for the non-IMP-based attacks, using the ${D_{KL}}(\eta _i^a||{\eta _i})$ from Theorem 5, we define $c_i^2(t)$ as
\begin{equation}
c_i^2(t) = {\kappa _2}\int\limits_0^t {{e^{{\kappa _2}(\tau  - t)}}\chi _i^2(} \tau )d\tau 
\label{m27}
\end{equation}
where  $0 \leqslant c_i^2(t) \leqslant 1$ with
\begin{equation}
\chi _i^2(t) = \frac{{{\Delta _i}}}{{{\Delta _i} + {D_{KL}}(\eta _i^a||{\eta _i})}}
\label{m28}
\end{equation}
where ${\Delta _i}$ represents the threshold to account for the channel fading and other uncertainties, and ${\kappa _2} > 0$ denotes the discount factor. Expression \eqref{m27} can be generated by
\begin{equation}
\nonumber
\begin{gathered}
\dot c_i^2(t) + {\kappa _2}c_i^2(t) = {\kappa _2}\chi _i^2(t)
\end{gathered} 
\end{equation}

Using Theorem $6$ and the same argument as we employed for $c_i^1(t)$, one can show that $c_i^2(t)$ is close to $1$ in the absence of an attack, and close to zero in the presence of a non-IMP based attack.\par 
Then, using $c_i^1(t)$ and $c_i^2(t)$ defined in \eqref{m25} and \eqref{m27}, the self-belief of an agent $i$ for both IMP and non-IMP-based attacks is defined as
\begin{equation}
{\xi_i}(t) = \min \{ c_i^1(t),\,\,c_i^2(t)\} 
\label{m29}
\end{equation}

If an agent $i$ is under direct attack or receives corrupted information from its neighbors, then the self-belief of the agent $i$ tends to zero. In such a situation, it transmits the low self-belief value to its neighbor to put less weight on the information they receive from it and this prevents attack propagation in the distributed network.

\subsection{Trust of agents about their incoming information}
The trust value represents the level of confidence of an agent on its neighbors' information. If the self-belief value of an agent is low, it forms beliefs on its neighbors (either intact or compromised) and updates its trust value which depends on the beliefs on each of its neighbors using only local information. Therefore, agents identify the compromised neighbor and discard its information.\par
Using the KL divergence between exchanged information of agent $i$ and its neighbor, one can define ${\eta _{ij}}(t)$ as
\begin{equation}
{\eta _{ij}}(t) = {\kappa _3}\int\limits_0^t {{e^{{\kappa _3}(\tau  - t)}}{L_{ij}}(} \tau )d\tau \,\,\,\,
\label{m30}
\end{equation}
where  $0 \leqslant {\eta _{ij}}(t) \leqslant 1$ with 
\begin{equation}
{L_{ij}}(t) = 1 - \frac{{{\Lambda _1}}}{{{\Lambda _1} + {e^{\left( {\frac{{ - {\Lambda _2}}}{{{D_{KL}}({x_j}||{m_i})}}} \right)}}}}\begin{array}{*{20}{c}}
  {}&{\forall j \in {N_i}}&{} 
\end{array}
\label{m31}
\end{equation}
with ${m_i} = \sum\limits_{j \in {N_i}} {{x_j}}$ and ${\Lambda _1},{\Lambda _2} > 0$ represent the threshold to account for channel fading and other uncertainties, and ${\kappa _3} > 0$ denotes the discount factor. For the compromised neighbor, the KL divergence ${D_{KL}}({x_j}||{m_i})$ tends to zero, which makes ${L_{ij}}(t)$ close to zero. Consequently, this makes the  value of ${\eta _{ij}}(t)$ close to zero. On the other hand, if the incoming neighbor is not compromised, then ${D_{KL}}({x_j}||{m_i})$ increases and makes ${\eta _{ij}}(t)$ approach $1$.  Equation \eqref{m30} can be implemented using the following differential equation
\begin{equation}
\nonumber
\begin{gathered}
{\dot \eta _{ij}}(t) + {\kappa _3}{\eta _{ij}}(t) = {\kappa _3}{L_{ij}}(t)
\end{gathered} 
\end{equation}

Now, we define the trust value of an agent on its neighbors as
\begin{equation}
{\Omega _{ij}}(t) = \max ({\xi_i}(t),{\eta _{ij}}(t))
\label{m32}
\end{equation}
with $0 \leqslant {\Omega _{ij}}(t) \leqslant 1$.

In the absence of attacks, the state of agents converge to the consensus trajectory and the KL divergence ${D_{KL}}({x_j}||{m_i}),\,\,\forall j \in {N_i}$ tends to zero which results in ${\Omega _{ij}}(t)$ being $1$ $\forall j \in {N_i}$.  In the presence of attacks, ${\eta _{ij}}(t)$ corresponding to the compromised agents tends to zero. 
\vspace{-0.3cm}
\subsection{The mitigation mechanism using trust and self-belief values}
In this subsection, the trust and self-belief values are utilized to design the mitigation algorithm. To achieve resiliency, both self-belief and trust values are incorporated into the exchange information among agents. Consequently,  the resilient form of local neighborhood tracking error \eqref{eq12} is presented as
\begin{equation}
{\tilde{\eta} _i} = \sum\limits_{j \in {N_i}} {{\Omega _{ij}}(t){\xi_j}(t){a_{ij}}\left( {{x_j} - {x_i}} \right)}  + {\omega _i}
\label{m33}
\end{equation}
where ${\Omega _{ij}}(t)$ and ${\xi_j}(t)$ denote, respectively, the trust value and the self-belief of neighboring agents. Using \eqref{eq5} and \eqref{m33}, the resilient control protocol becomes
\begin{equation}
\tilde{u}_{i}=cK\tilde{\eta} _i,\,\,\,\,\,\,\,\forall\,\, i \in \mathcal{N}
\label{eq33a}
\end{equation}
According  to \eqref{m33}, the topology of the graph changes over time due to incorporation of the trust and the self-belief values of agents, thus we denote the time-varying graph as $\mathcal{G}(t) = (\mathcal{V},{\mkern 1mu} \mathcal{E}(t))$ with $\mathcal{E}(t) \subseteq \mathcal{V} \times \mathcal{V}$ representing the set of time-varying edges.\par


  

Now, based on following definitions and lemma, we formally present Theorem 6 to illustrate that the trust and self-belief based proposed resilient control protocol \eqref{eq33a} solves Problem 1 and all intact agents, i.e., $\mathcal{N}_{Int}= \mathcal{N} \backslash \mathcal{N}_{Comp}$ as defined in Definition 4 achieve the final desired consensus value for DMAS in \eqref{eq1}, despite attacks.
\smallbreak

\noindent
\textbf{Definition 6 (r-reachable set) \cite{Wupaper}.} Given a directed graph ${\mathcal{G}}$ and a nonempty subset ${\mathcal{V}_s} \subset \mathcal{V}$, the set ${\mathcal{V}_s}$ is r-reachable if there exists a node $i \in {\mathcal{V}_s}$ such that  $\left| {{\mathcal{N}_i}\backslash {\mathcal{V}_s}} \right| \geqslant r$, where $r \in \mathbb{Z}_{\geqslant 0}$.\hfill $\square$
\smallskip

\noindent
\textbf{Definition 7 (r-robust graph) \cite{Wupaper}.} A directed graph $\mathcal{G}$ is called an r-robust graph with  $r \in \mathbb{Z}_{\geqslant 0}$ if for every pair of nonempty, disjoint subsets of $\mathcal{V}$, at least one of the subsets is r-reachable.\hfill $\square$
\smallskip

\noindent
\textbf{Assumption 3.} If at most $q$ neighbors of each intact agents is under attack, at least $(q + 1)$ neighbors of each intact agents are intact.\par
\smallskip

\noindent
\textbf{Lemma 6.} \cite{Wupaper} Consider an r-robust time-varying directed graph $\mathcal{G}(t)$. Then, the graph has a directed spanning tree, if and only if $\mathcal{G}(t)$ is 1-robust. \par

\begin{theorem}
{Consider the DMAS  \eqref{eq1} under attack with the proposed resilient control protocol $\tilde{u}_{i}$ in \eqref{eq33a}. Let the time-varying graph $\mathcal{G}(t)$ be such that at each time instant $t$, Assumption $1$ and Assumption $3$ are satisfied. Then, ${\mathop {\lim }\limits_{t \to \infty } ||x_j(t)-x_{i}(t)|| = 0}\,\,\,\,\,{\forall i,j \in {\mathcal{N}_{Int}}}$.}
\end{theorem}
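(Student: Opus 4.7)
The plan is to decouple the behavior of the modified weights from the underlying consensus dynamics, and then invoke a standard time-varying consensus result on the effective subgraph of intact agents. I will proceed in three stages.

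First, I would characterize the asymptotic behavior of the modulating factors $\Omega_{ij}(t)\xi_j(t)$ for an intact agent $i$ and a neighbor $j$. If $j$ is compromised, then by Theorems~4 and 5 the relevant KL divergences remain above the detection threshold $\gamma_j$ on every sliding window, so $\chi_j^1$ and $\chi_j^2$ in \eqref{m26} and \eqref{m28} stay bounded away from 1; passing this through the stable first-order filters \eqref{m25} and \eqref{m27} forces $\xi_j(t) = \min\{c_j^1(t),c_j^2(t)\}$ to decay toward a value near $0$. Symmetrically, the trust ${\Omega}_{ij}(t)$ formed by agent $i$ on a compromised neighbor is driven to $0$ through \eqref{m30}--\eqref{m32} because ${L_{ij}}(t)$ vanishes when $x_j$ diverges from the local mean $m_i$. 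Conversely, for an intact neighbor $j$ the two KL quantities are small (Theorems~4 and 5, Part 1), so $\xi_j(t)$ and $\Omega_{ij}(t)$ are bounded below by a positive constant. Thus the effective edge weights $w_{ij}(t)=\Omega_{ij}(t)\xi_j(t)a_{ij}$ essentially vanish on edges from compromised to intact agents, while remaining positive and bounded on edges between intact agents.

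Second, I would argue that the restriction of the resulting time-varying graph $\mathcal{G}(t)$ to $\mathcal{N}_{Int}$ contains a directed spanning tree at each time $t$. Assumption~3 guarantees that each intact agent has at least $(q+1)$ intact neighbors whenever it has at most $q$ compromised ones, which yields a $1$-robust structure on $\mathcal{N}_{Int}$ in the sense of Definition~7: for any nontrivial partition of the intact agents, one of the parts must contain a node with at least one intact in-neighbor in the other part, since otherwise one part would have $q+1$ compromised in-neighbors, contradicting Assumption~3. By Lemma~6, $1$-robustness implies the existence of a directed spanning tree in the intact subgraph at every time instant, and Assumption~1 guarantees this spanning-tree property is preserved as edges from compromised agents are discarded.

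Third, with the compromised-to-intact couplings driven to zero, the dynamics of the intact subsystem reduces asymptotically to
\begin{equation}
\dot x_i(t) = A x_i(t) + cBK\!\!\sum_{j\in\mathcal{N}_i\cap\mathcal{N}_{Int}}\!\!w_{ij}(t)(x_j(t)-x_i(t)) + \varepsilon_i(t),\ \ i\in\mathcal{N}_{Int},
\end{equation}
where $\varepsilon_i(t)$ collects the vanishing residual effects of the compromised edges and the zero-mean communication noise. Since the intact subgraph is $1$-robust and hence contains a spanning tree uniformly in $t$, I would apply the standard result on cooperative synchronization over switching/time-varying topologies with a uniformly connected spanning tree (as invoked in \cite{li2014cooperative,lewis2013cooperative} with $c>\tfrac{1}{2\lambda_{\min}(\mathcal{L})}$ and $K$ from the ARE) to conclude $\|x_i(t)-x_j(t)\|\to 0$ for all $i,j\in\mathcal{N}_{Int}$.

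The main obstacle I anticipate is the self-consistency between detection and mitigation: the trust and self-belief signals depend on the state trajectories via the KL divergences, while the state trajectories depend on those signals through the weights $w_{ij}(t)$. Handling this cleanly requires a two-time-scale or contraction-type argument showing that once $w_{ij}(t)$ on compromised edges gets close enough to zero, the intact trajectories move toward consensus, which in turn keeps those weights small; the proof sketch will therefore emphasize the monotone nature of \eqref{m25}, \eqref{m27}, and \eqref{m30} after a sufficiently long detection window $T$ so that the feedback loop cannot reverse the decay of $\xi_j$ and $\Omega_{ij}$ on compromised edges.
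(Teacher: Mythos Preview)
Your proposal follows essentially the same route as the paper's proof: argue that the trust/self-belief weights effectively excise the compromised-to-intact edges, use Assumption~3 together with Lemma~6 to guarantee a spanning tree over $\mathcal{N}_{Int}$ in the resulting time-varying graph, and then invoke a standard time-varying consensus result (the paper cites \cite{renbeard2005}) to conclude synchronization of the intact agents. Your treatment is in fact more careful than the paper's---you explicitly trace the decay of $\xi_j$ and $\Omega_{ij}$ through the first-order filters and flag the detection/mitigation feedback loop---whereas the paper simply asserts that intact agents ``can update their trust values to remove the compromised neighbors'' and moves directly to the spanning-tree and consensus citation.
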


\begin{proof}
The DMAS  \eqref{eq1}  with the proposed resilient control protocol $\tilde{u}_{i}$ in \eqref{eq33a} without noise can be written as
\begin{equation}
{\dot x _i} = A{x _i} + cBK \sum\limits_{j \in {N_i}} {{a_{ij}}(t)\left( {{x_j} - {x_i}} \right)}
\label{m34}
\end{equation}
where ${a_{ij}}(t) = {\Psi _{ij}}(t){C_j}(t){a_{ij}}$. The global form of resilient system dynamics in \eqref{m34} becomes 
\begin{equation}
\dot x =({{I_N} \otimes A} -c\mathcal{L}(t) \otimes {BK})x
\label{m35}
\end{equation}
where $\mathcal{L}(t)$ denotes the time-varying graph Laplacian matrix of the directed graph $\mathcal{G}(t)$. Based on Assumption $3$, even if $q$ neighbors of an intact agent are attacked and collude to send the corrupted value to misguide it, there still exists $q + 1$ intact neighbors that communicate values different from the compromised ones. Moreover, since at least $q+1$ of the intact agent's neighbors are intact, it can update its trust values to remove the compromised neighbors. Furthermore, since the time varying graph $\mathcal{G}(t)$ resulting from isolating the compromised agents is 1-robust, based on Definition $7$ and Lemma $6$, the entire network is still connected to the intact agents. Therefore, there exists a spanning tree in the graph associated with all intact agents $\mathcal{N}_{Int}$. Hence, it is shown in \cite{renbeard2005} that the solutions of DMAS in \eqref{m35} reach consensus on desired behavior if the time-varying graph $\mathcal{G}(t)$ jointly contains a spanning tree as the network evolves with time. This results in ${\mathop {\lim }\limits_{t \to \infty } ||x_j(t)-x_{i}(t)|| = 0}\,\,\,\,\,{\forall i,j \in {\mathcal{N}_{Int}}}$ assymptotically. This completes the proof.
\end{proof}

\smallskip

\noindent
\textbf{Remark 7.} The proposed approach discards the compromised agent only when an attack is detected, in contrast to most of the existing methods that are based on solely the discrepancy among agents. Note that discrepancy can be the result of a legitimate change in the state of one agent. Moreover, in the beginning of synchronization, there could be a huge discrepancy between agents' states that should not be discarded.\par

\section{Simulation Results}
In this section, an example is provided to illustrate the effectiveness of the proposed detection and mitigation approaches. Consider a group of 5 homogeneous agents with the dynamics defined as
\begin{equation}
  {{\dot x}_k} = {A}{x_k} + {B}{u_k} \hfill \\
\begin{array}{*{20}{c}}
  {}&{k = 1, \ldots ,5} 
\end{array}
\label{m37}
\end{equation}
where 
\[{A} = \left[ {\begin{array}{*{20}{c}}
  0&-1 \\ 
  1&0 
\end{array}} \right],\,\,\,{B} = \left[ {\begin{array}{*{20}{c}}
  1 \\ 
  0 
\end{array}}\right].\]
The communication graph is shown in Fig. 1. In the absence of an attack, agents reach desired synchronization and there emerges the healthy behavior of the system with noisy communication as shown in Fig. 2.

\begin{figure}[!ht]
\begin{center}
\includegraphics[width=1.5in,height=1.5in]{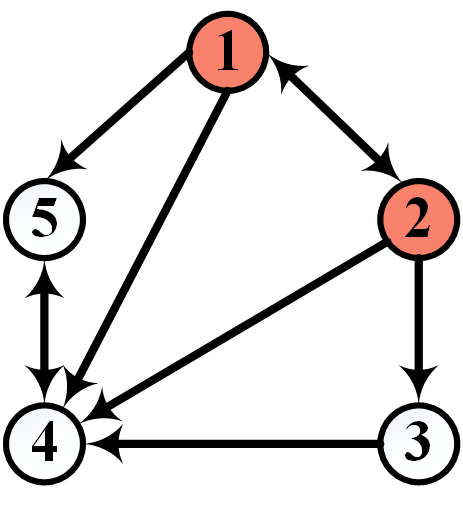}
\vspace{-5pt}\caption{Communication topology.}\label{fig:Fig2-1}
\captionsetup{justification=centering}
\end{center}
\end{figure}

\begin{figure}
    \centering
    \begin{subfigure}[b]{0.46\textwidth}
        \centering
        \includegraphics[width=1\linewidth,height=3.2cm]{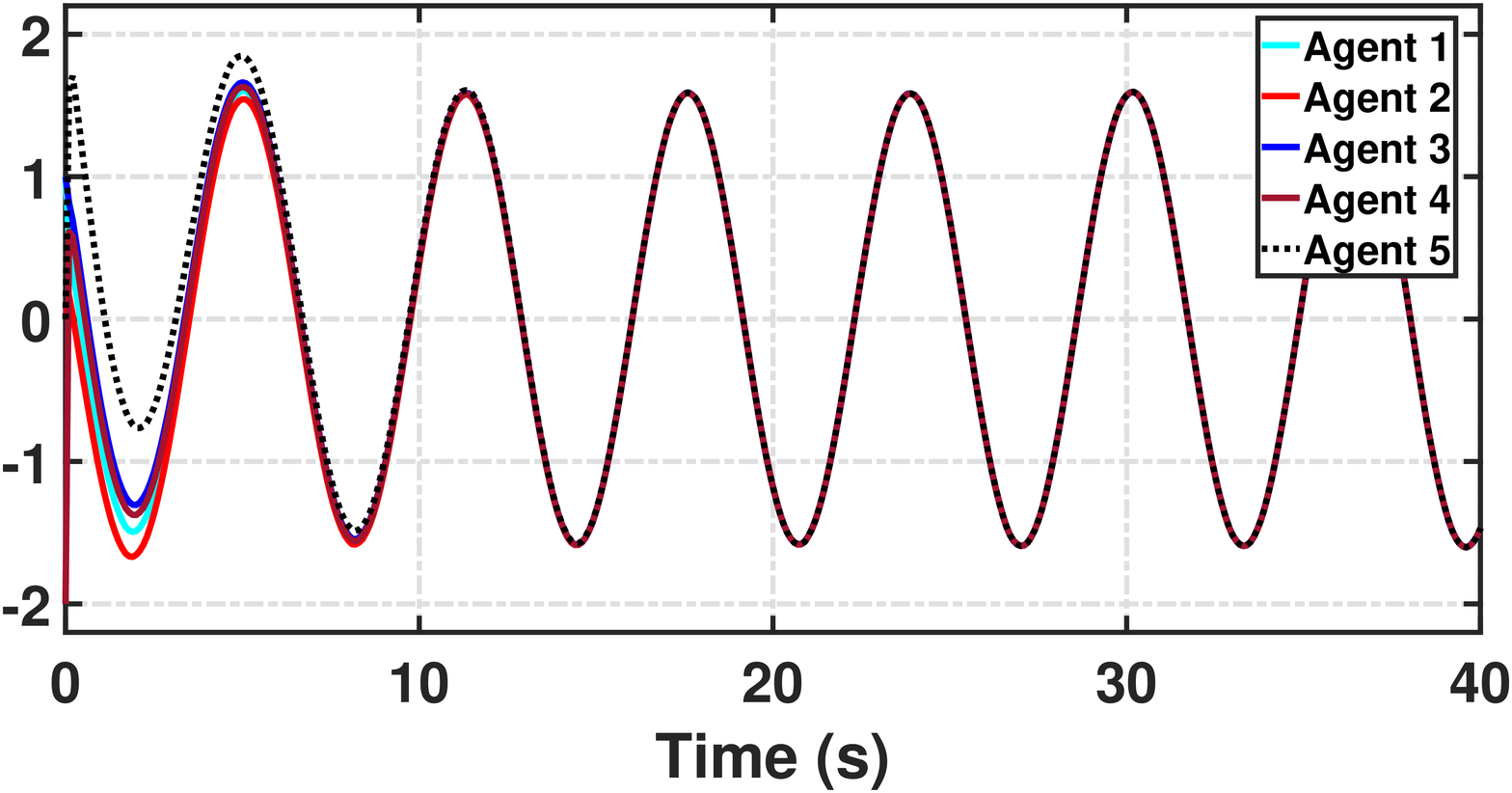}   
        \caption{}
        \label{fig:S_RS_UL1}
    \end{subfigure}
    \hspace{0cm}
    \begin{subfigure}[b]{0.46\textwidth}
        \centering
        \includegraphics[width=1\linewidth,height=3.2cm]{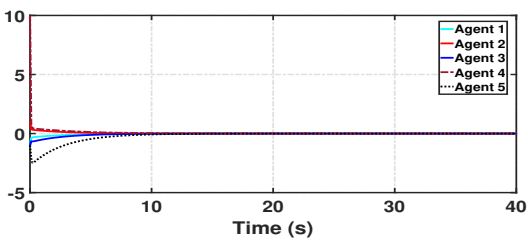}
        \caption{}
        \label{fig:S_RS_L2}
    \end{subfigure}
    \caption{Desired synchronization in the absence of attack. (a) The state of agents. (b) The local neighborhood tracking error of agents.}
    \label{fig:RS_UL2}
\end{figure}
\vspace{-0.3cm}
\subsection{IMP-based attacks}
\vspace{-0.3cm}
This subsection analyzes the effects of IMP-based attacks and illustrates our attack detection and mitigation scheme. The attack signal is assumed to be $f = 20\sin (t)$. This is an IMP-based attack and is assumed to be launched on Agent 1 (root node) at time t=20. The results are shown in Fig. 3. It can be seen that the compromised agent destabilizes the entire network. This result is consistent with Theorem 2. It is shown in Fig. 4 that the same IMP-based attack on Agent 5 (noon-root node) cannot destabilize the entire network.  However, Agent 4, which is the only agent reachable from Agent 5, does not synchronize to the desired consensus trajectory. Moreover, one can see that the local neighborhood tracking error converges to zero for all intact agents except the compromised Agent 5. These results are in line with Theorem 3. Fig.5 shows the divergence in the presence of non-IMP based attack on Agent 5 based on Theorem 4. Then, the effect of attack is rejected using the presented belief-based detection and mitigation approach in Theorem 4 and Theorem 6. Fig.6 shows that reachable agents follow the desired consensus trajectory, even in the presence of the attack. 
\begin{figure}[!ht]
\begin{center}
\includegraphics[width=1\linewidth,height=3.2cm]{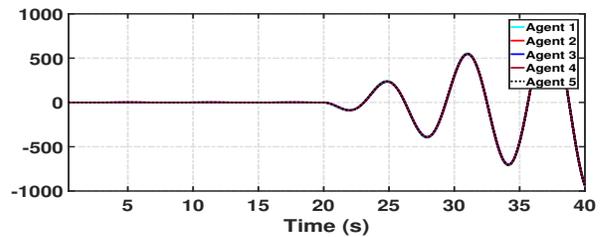}
\vspace{-5pt}\caption{The state of agents when Agent 1 is under an IMP-based attack.}\label{fig:Fig2-2}
\captionsetup{justification=centering}
\end{center}
\end{figure}

\begin{figure}
    \centering
    \begin{subfigure}[b]{0.46\textwidth}
        \centering
        \includegraphics[width=1\linewidth,height=3.2cm]{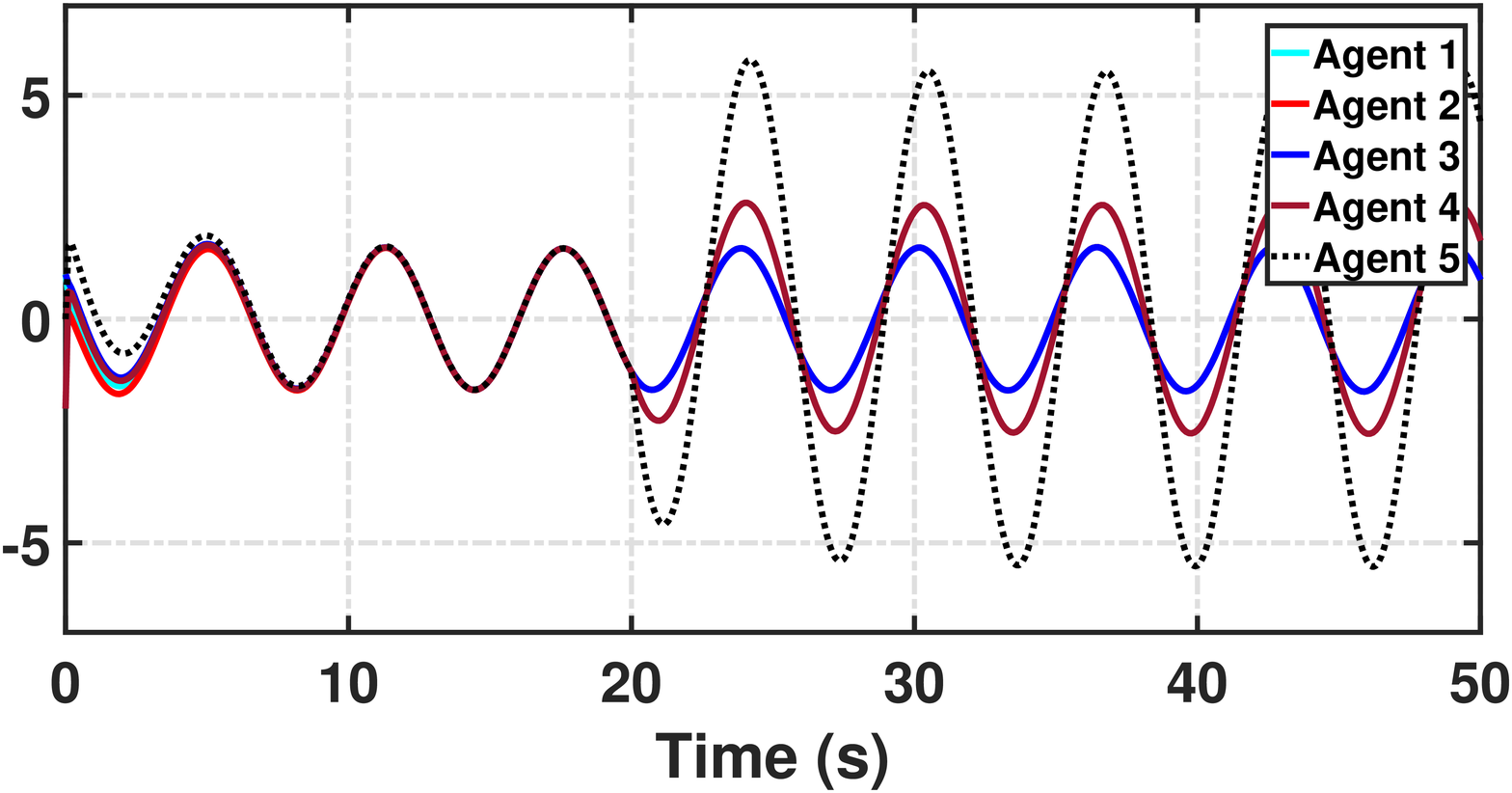}   
        \caption{}
        \label{fig:S_RS_UL3}
    \end{subfigure}
    \hspace{0cm}
    \begin{subfigure}[b]{0.46\textwidth}
        \centering
        \includegraphics[width=1\linewidth,height=3.2cm]{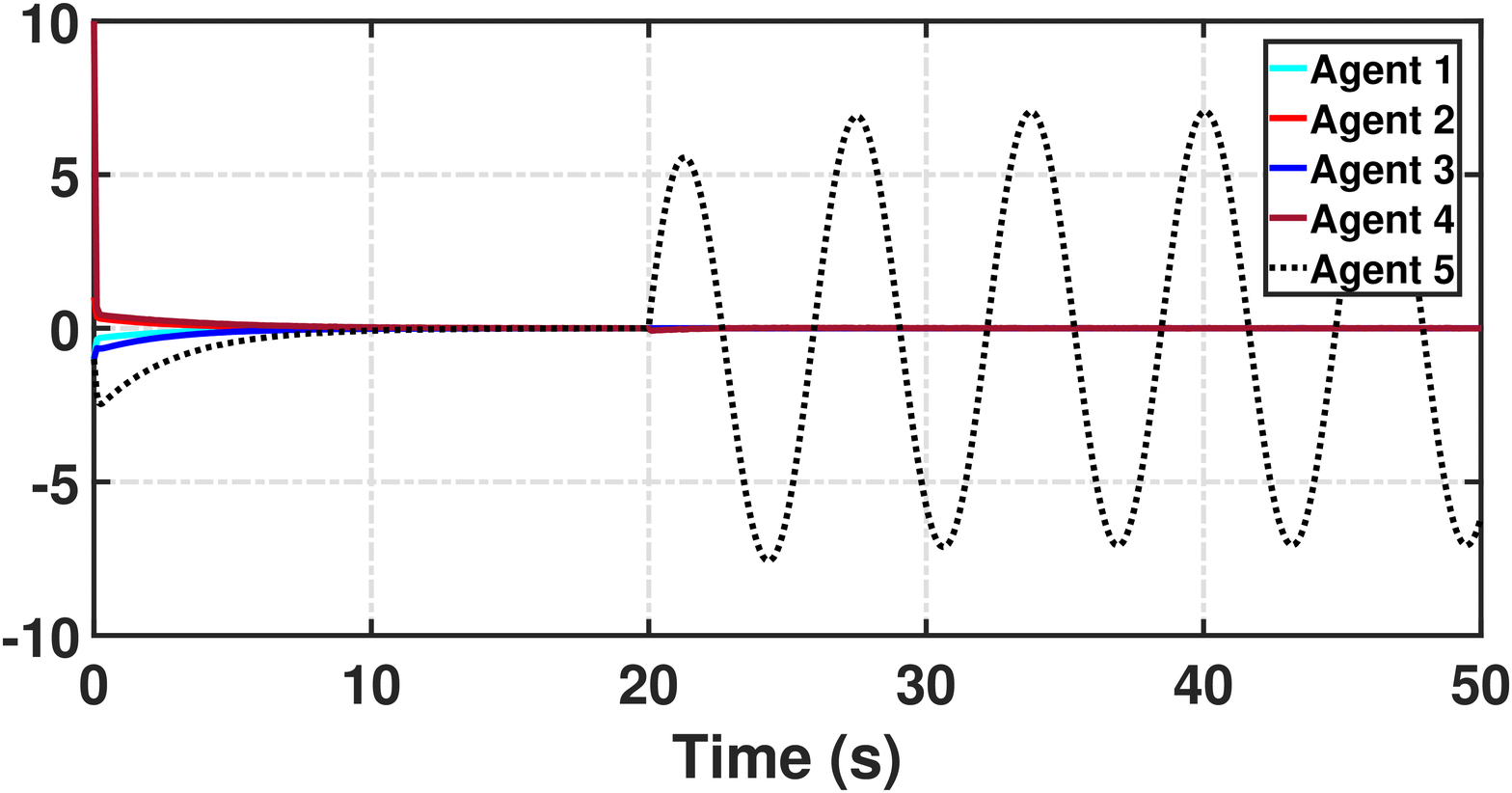}
        \caption{}
        \label{fig:S_RS_L4}
    \end{subfigure}
    \caption{Agent 5 is under IMP-based attack. (a) The state of agents. (b) The local neighborhood tracking error of agents.}
    \label{fig:RS_UL1}
\end{figure}

\begin{figure}[!ht]
\begin{center}
\includegraphics[width=1\linewidth,height=3.2cm]{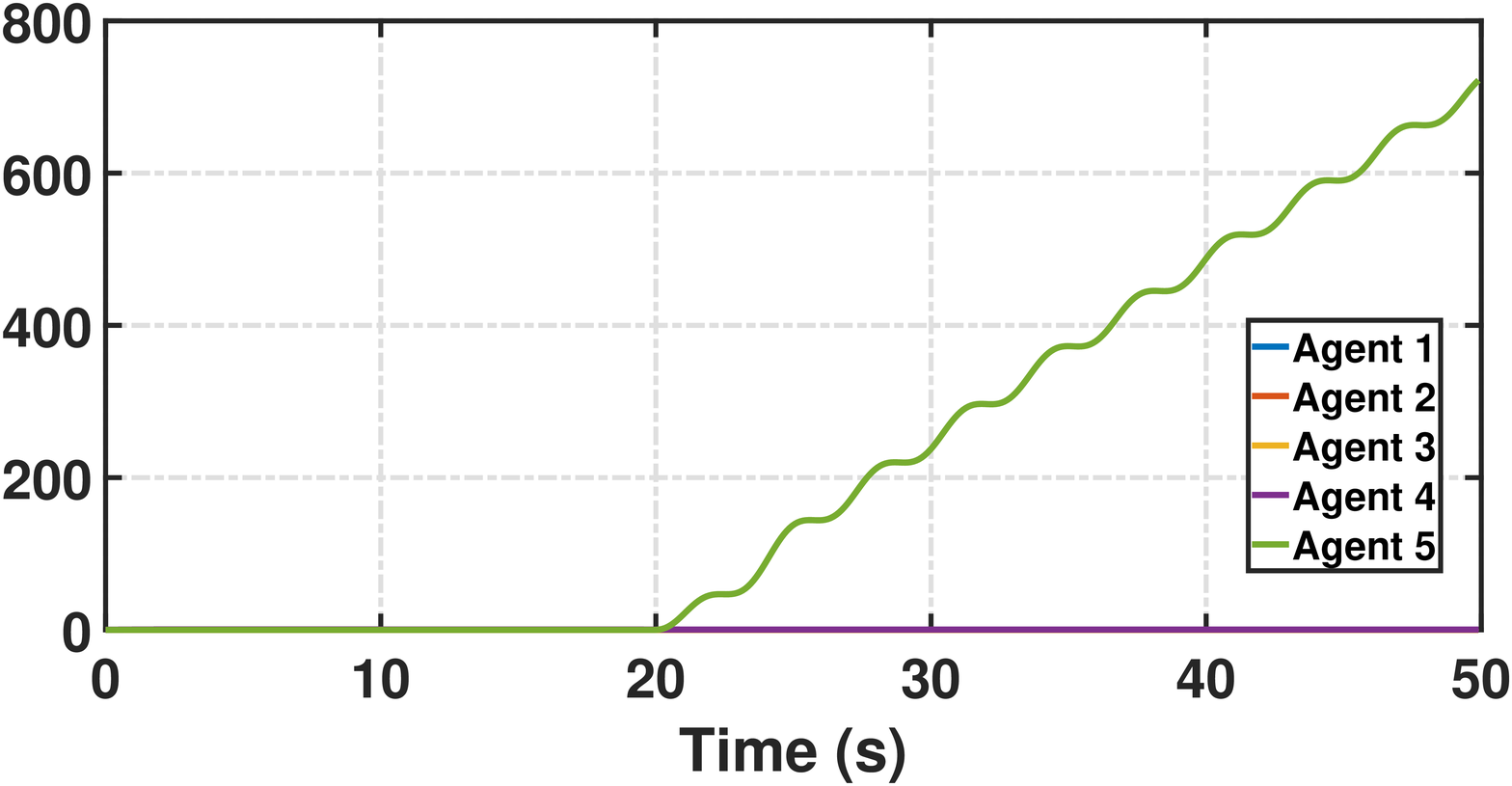}
\vspace{-5pt}\caption{Divergence for state of agents when Agent 5 is under a IMP-based attack.}\label{fig:Fig2-3}
\captionsetup{justification=centering}
\end{center}
\end{figure}
\vspace{-0.3cm}

\begin{figure}[!ht]
\begin{center}
\includegraphics[width=1\linewidth,height=3.2cm]{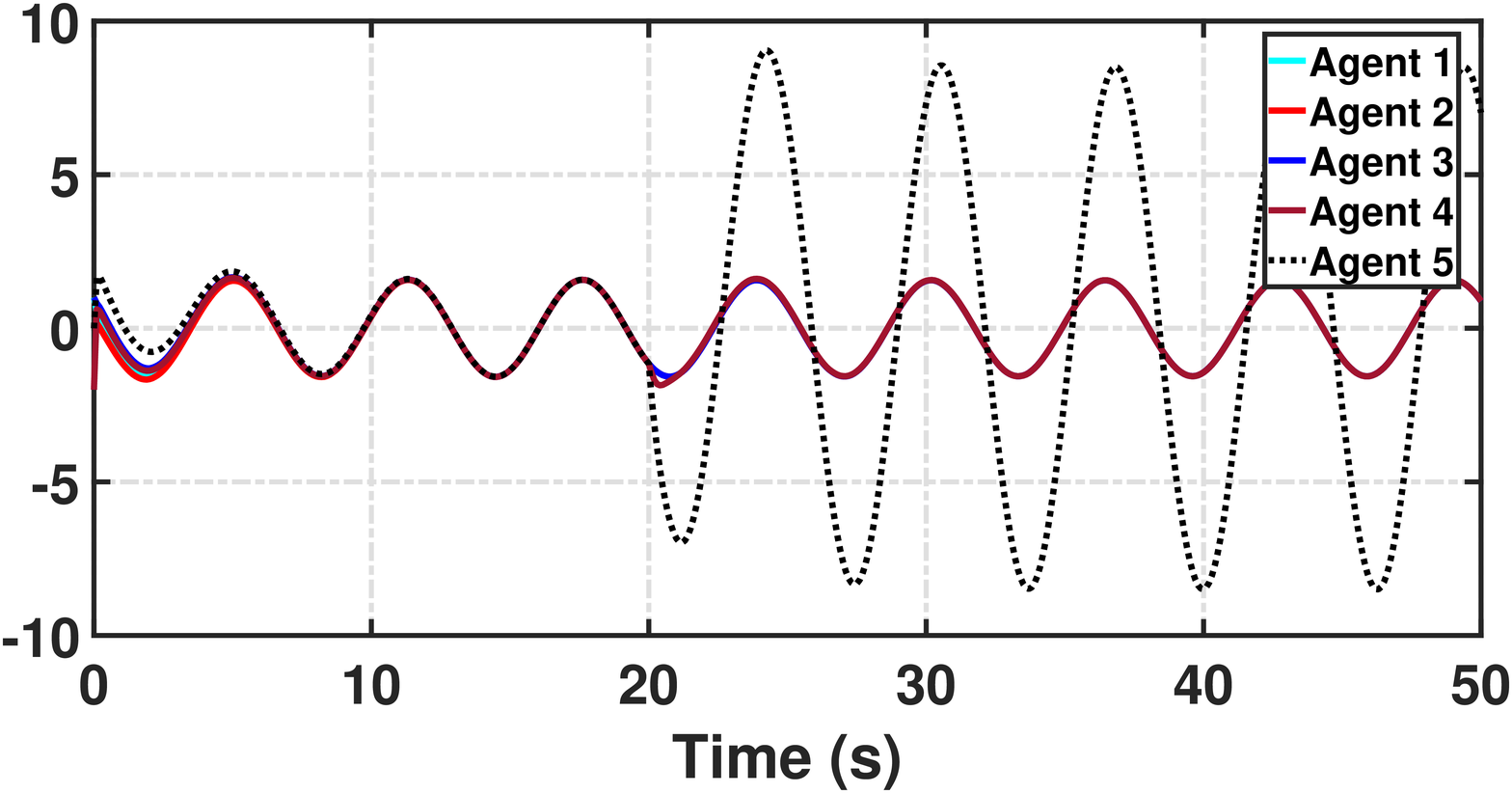}
\vspace{-5pt}\caption{The state of agents using the proposed attack detection and mitigation  approach for IMP-based attack.}\label{fig:Fig2-3}
\captionsetup{justification=centering}
\end{center}
\end{figure}
\vspace{-0.3cm}
\subsection{Non-IMP-based attacks}
This subsection analyzes the effects of non-IMP-based attacks and validates our attack detection and mitigation approach. The attack signal is assumed to be $f = 10 + 5\sin (2t)$. The effect of this non-IMP-based attack on Agent 5 (non-root node) is shown in Fig.7. It can be seen that this non-IMP-based attack on Agent 5 only affects the reachable Agent 4. Then, Fig.8 shows the divergence in the presence of non-IMP based attack on Agent 5 based on Theorem 5.  It is shown in Fig.9 that the effect of the attack is removed for the intact Agent $4$ using belief-based detection and mitigation approaches presented in Theorems 5 and 6. 

\begin{figure}[!ht]
\begin{center}
\includegraphics[width=1\linewidth,height=3.2cm]{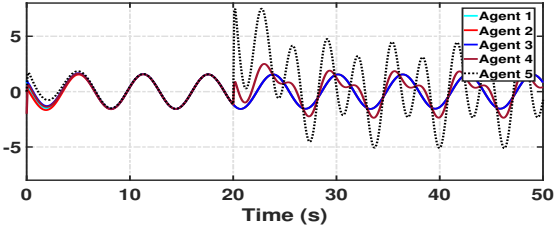}
\vspace{-5pt}\caption{The state of agents when Agent 5 is under a non-IMP-based attack.}\label{fig:Fig2}
\captionsetup{justification=centering}
\end{center}
\end{figure}

\begin{figure}[!ht]
\begin{center}
\includegraphics[width=1\linewidth,height=3.2cm]{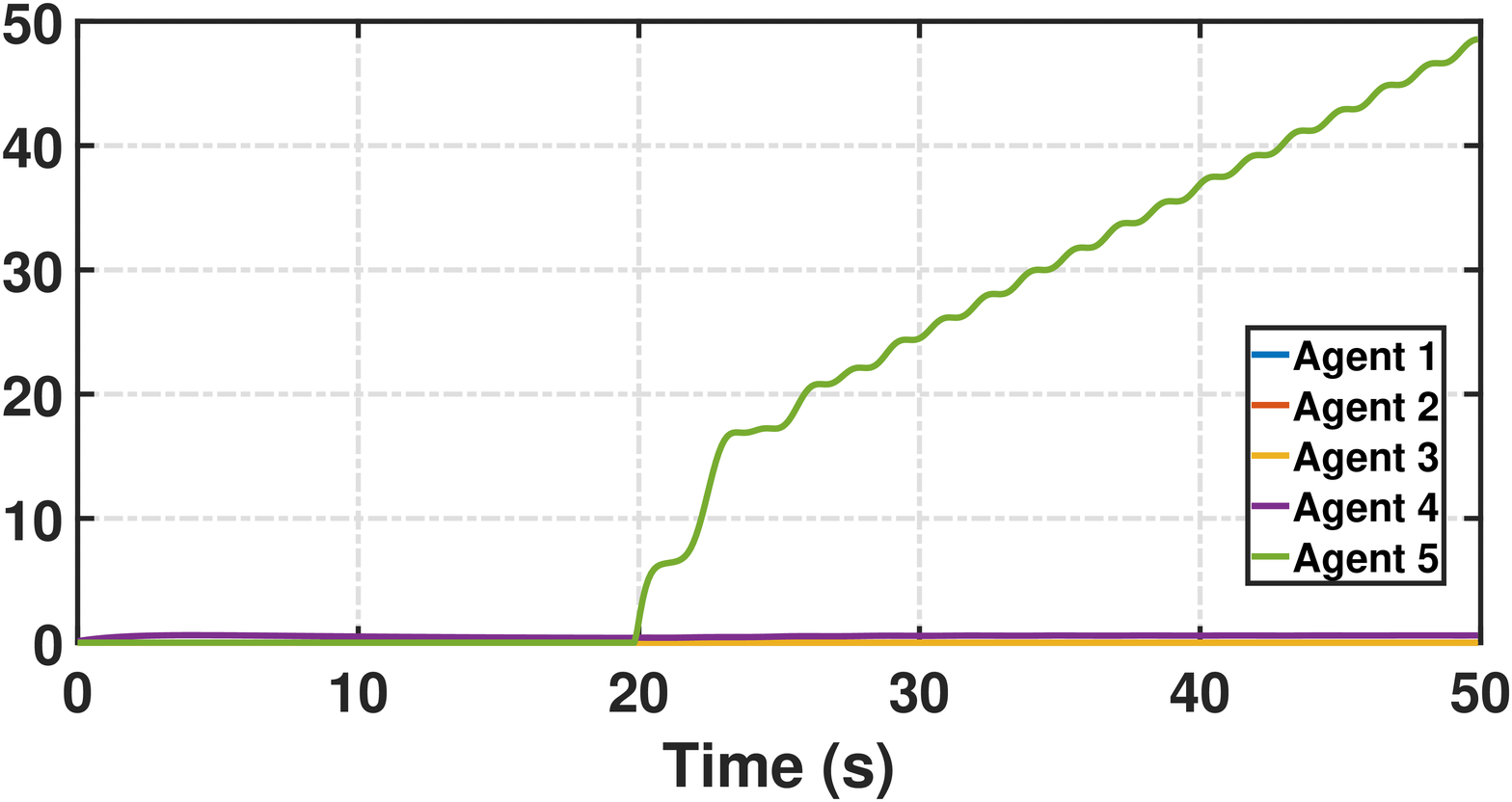}
\vspace{-5pt}\caption{Divergence for state of agents when Agent 5 is under a non-IMP based attack.}\label{fig:Fig2-4}
\captionsetup{justification=centering}
\end{center}
\end{figure}

\begin{figure}[!ht]
\begin{center}
\includegraphics[width=1\linewidth,height=3.2cm]{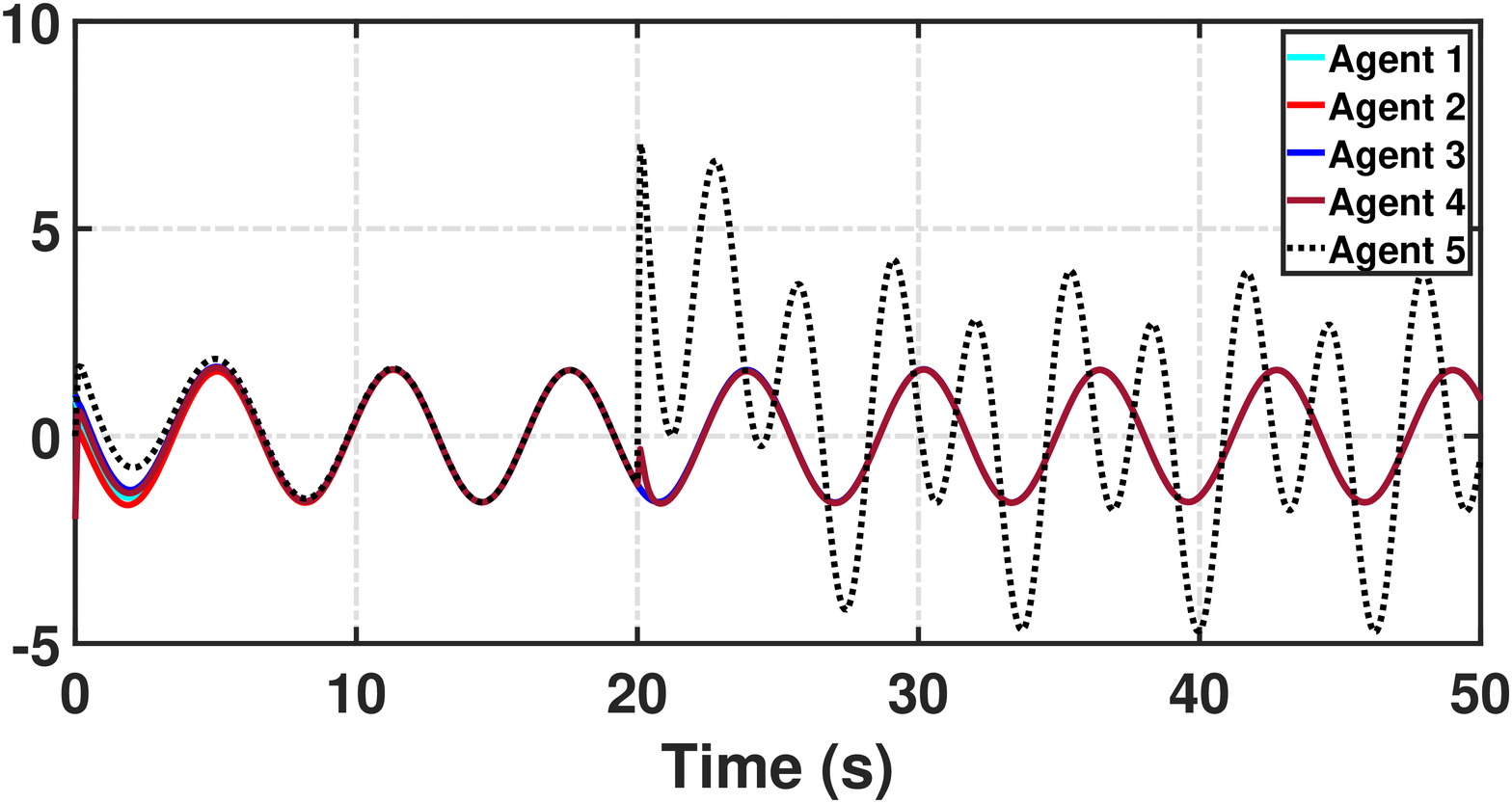}
\vspace{-5pt}\caption{The state of agents after attack detection and mitigation for non-IMP based attack.}\label{fig:Fig2-4}
\captionsetup{justification=centering}
\end{center}
\end{figure}

\section{Conclusion}
A resilient control framework has been introduced for DMASs. First, the effects of IMP-based and non-IMP-based attacks on DMASs have been analyzed using a graph-theoretic approach. Then, a KL divergence based criterion, using only the observed local information of agents, has been employed to detect attacks. Each agent detects its neighbors' misbehaviors, consequently forming a self-belief about the correctness of its own information, and continuously updates its self-belief and communicates it with its neighbors to inform them about the significance of its outgoing information. Additionally, if the self-belief value of an agent is low, it forms beliefs on the type of its neighbors (intact or compromised) and, consequently, updates its trust of its neighbors. Finally, agents incorporate their neighbors' self-beliefs and their own trust values in their control protocols to slow down and mitigate attacks. 

A possible direction for future work is to extend these results to synchronization of DMASs with nonlinear dynamics. Since nonlinear systems can exhibit finite-time escape behavior, a problem of interest is to find the conditions under which the attacker can make the trajectories of agents become unbounded in finite time, and to obtain detection and mitigation mechanisms to counteract such attacks fast and thus avoid instability.

\appendices 
\section{Proof of Lemma 5}  
Using \eqref{m6}, the KL divergence between error sequences ${\varphi _i}$ and ${\tau _i}$ can be written as 
\begin{equation}
{D_{KL}}({\varphi _i}||{\tau _i}) = {\mathbb{E}_1}[\log {P_{{\varphi _i}}} - \log {P_{{\tau _i}}}] 
\label{m40}
\end{equation}
where probability density functions ${P_{{\varphi _i}}}$ and ${P_{{\tau _i}}}$ are defined in \eqref{m4}. Using \eqref{m4}, \eqref{m40} becomes
\begin{align}
\begin{gathered}
 {D_{KL}}({\varphi _i}||{\tau _i})  \hfill \\= {\mathbb{E}_1}[\log \left( {\frac{1}{{\sqrt {2\pi } \left| {{\sigma _{1i}}} \right|}}{e^{ - \frac{{{{({q_i} - {\mu _{1i}})}^2}}}{{2\sigma _{1i}^2}}}} + \frac{1}{{\sqrt {2\pi } \left| {{\sigma _{1i}}} \right|}}{e^{ - \frac{{{{({q_i} + {\mu _{1i}})}^2}}}{{2\sigma _{1i}^2}}}}} \right) \hfill \\
   - \log \left( {\frac{1}{{\sqrt {2\pi } \left| {{\sigma _{2i}}} \right|}}{e^{ - \frac{{{{({q_i} - {\mu _{2i}})}^2}}}{{2\sigma _{2i}^2}}}} + \frac{1}{{\sqrt {2\pi } \left| {{\sigma _{2i}}} \right|}}{e^{ - \frac{{{{({q_i} + {\mu _{2i}})}^2}}}{{2\sigma _{2i}^2}}}}} \right)] \hfill \\ 
\end{gathered}
\label{m41}
\end{align}

By the aid of the logarithm property as $\log (a + b) = \log (a) + \log (1 + b/a)$, \eqref{m41} turns into
\begin{align}
\begin{gathered}
  {D_{KL}}({\varphi _i}||{\tau _i}) =  \hfill \\
  \, = \underbrace {{\mathbb{E}_1}[\log \left( {\frac{1}{{\sqrt {2\pi } \left| {{\sigma _{1i}}} \right|}}{e^{ - \frac{{{{({q_i} - {\mu _{1i}})}^2}}}{{2\sigma _{1i}^2}}}}} \right) - \log \left( {\frac{1}{{\sqrt {2\pi } \left| {{\sigma _{2i}}} \right|}}{e^{ - \frac{{{{({q_i} - {\mu _{2i}})}^2}}}{{2\sigma _{2i}^2}}}}} \right)]}_{{T_1}} \hfill \\
  \,\,\,\,\,\,\,\,\,\,\,\,\,\,\,\,\,\,\,\,\, + \underbrace {{\mathbb{E}_1}[\log \left( {1 + {e^{ - \frac{{2{q_i}{\mu _{1i}}}}{{\sigma _{1i}^2}}}}} \right) - \log \left( {1 + {e^{ - \frac{{2{q_i}{\mu _{2i}}}}{{\sigma _{2i}^2}}}}} \right)]}_{{T_2}} \hfill \\ 
\end{gathered} 
\label{m42}
\end{align}

The first term in \eqref{m42} is a KL divergence formula for statistical sequences with normal Gaussian distribution which is given in \cite{KLBOOK} as
\begin{equation}
{T_1} = \frac{1}{2}\left( {\log \frac{{\sigma _{2i}^2}}{{\sigma _{1i}^2}} - 1 + (\sigma _{2i}^{ - 2}\sigma _{1i}^2)} \right) + \frac{1}{2}\sigma _{2i}^{ - 2}{({\mu _{2i}} - {\mu _{1i}})^2}
\label{m43}
\end{equation}

The second term ${T_2}$ in \eqref{m42}, using power series expansion $\log (1 + a) = \sum\limits_{n \geqslant 0} {\left( {{{{{( - 1)}^n}{a^{n + 1}}} \mathord{\left/
 {\vphantom {{{{( - 1)}^n}{a^{n + 1}}} {\left( {n + 1} \right)}}} \right.
 \kern-\nulldelimiterspace} {\left( {n + 1} \right)}}} \right)}$ and ignoring higher order terms, can be approximated as
\begin{equation}
{T_2} \approx {\mathbb{E}_1}[{e^{ - \frac{{2{q_i}{\mu _{1i}}}}{{\sigma _{1i}^2}}}} - \frac{{{{({e^{ - \frac{{2{q_i}{\mu _{1i}}}}{{\sigma _{1i}^2}}}})}^2}}}{2}] - {\mathbb{E}_1}[{e^{ - \frac{{2{q_i}{\mu _{2i}}}}{{\sigma _{2i}^2}}}} - \frac{{{{({e^{ - \frac{{2{q_i}{\mu _{2i}}}}{{\sigma _{2i}^2}}}})}^2}}}{2}]
\label{m44}
\end{equation}
which can be expressed as
\begin{align}
\begin{gathered}
  {T_2} \approx \int\limits_{ - \infty }^\infty  {{P_{{\varphi _i}}}{e^{ - \frac{{2{q_i}{\mu _{1i}}}}{{\sigma _{1i}^2}}}}} d{q_i} - \frac{1}{2}\int\limits_{ - \infty }^\infty  {{P_{{\varphi _i}}}{e^{ - \frac{{4{q_i}{\mu _{1i}}}}{{\sigma _{1i}^2}}}}} d{q_i} \hfill \\
   - \int\limits_{ - \infty }^\infty  {{P_{{\varphi _i}}}{e^{ - \frac{{2{q_i}{\mu _{2i}}}}{{\sigma _{2i}^2}}}}} d{q_i} + \frac{1}{2}\int\limits_{ - \infty }^\infty  {{P_{{\varphi _i}}}{e^{ - \frac{{4{q_i}{\mu _{2i}}}}{{\sigma _{2i}^2}}}}} d{q_i} \hfill \\ 
\end{gathered} 
\label{m45}
\end{align}

Now, the first term of ${T_2}$ can be written as
\begin{align}
\begin{gathered}
 \int\limits_{ - \infty }^\infty  {{P_{{\varphi _i}}}{e^{ - \frac{{2{q_i}{\mu _{1i}}}}{{\sigma _{1i}^2}}}}} d{q_i} \hfill \\
   = \int\limits_{ - \infty }^\infty  {\frac{1}{{\sqrt {2\pi } \left| {{\sigma _{1i}}} \right|}}{e^{ - \frac{{{{({q_i} + {\mu _{1i}})}^2}}}{{2\sigma _{1i}^2}}}}d{q_i} + \int\limits_{ - \infty }^\infty  {\frac{1}{{\sqrt {2\pi } \left| {{\sigma _{1i}}} \right|}}{e^{ - \frac{{{{({q_i} + {\mu _{1i}})}^2} + 4{q_i}{\mu _{1i}}}}{{2\sigma _{1i}^2}}}}d{q_i}} }  \hfill \\ 
\end{gathered} 
\label{m46}
\end{align}

Using the fact that density integrates to 1, \eqref{m46} becomes
\begin{equation}
\int\limits_{ - \infty }^\infty  {{P_{{\varphi _i}}}{e^{ - \frac{{2{q_i}{\mu _{1i}}}}{{\sigma _{1i}^2}}}}} d{q_i} = 1 + {e^{\frac{{4\mu _{1i}^2}}{{\sigma _{1i}^2}}}}
\label{m47}
\end{equation}

Similarly, second term of ${T_2}$ can be written as
\begin{align}
\begin{gathered}
- \frac{1}{2}\int\limits_{ - \infty }^\infty  {{P_{{\varphi _i}}}{e^{ - \frac{{4{q_i}{\mu _{1i}}}}{{\sigma _{1i}^2}}}}} d{q_i} \hfill \\
   =  - \frac{1}{{2\sqrt {2\pi } \left| {{\sigma _{1i}}} \right|}}\int\limits_{ - \infty }^\infty  {\left( {{e^{ - \frac{{{{({q_i} + 3{\mu _{1i}})}^2} - 8\mu _{1i}^2}}{{2\sigma _{1i}^2}}}}d{q_i} + {e^{ - \frac{{{{({q_i} + 5{\mu _{1i}})}^2} - 24\mu _{1i}^2}}{{2\sigma _{1i}^2}}}}d{q_i}} \right)}  \hfill \\ 
\end{gathered} 
\label{m48}
\end{align}
which yields
\begin{equation}
- \frac{1}{2}\int\limits_{ - \infty }^\infty  {{P_{{\varphi _i}}}{e^{ - \frac{{4{q_i}{\mu _{1i}}}}{{\sigma _{1i}^2}}}}} d{q_i} =  - \frac{1}{2}\left( {{e^{\frac{{4\mu _{1i}^2}}{{\sigma _{1i}^2}}}} + {e^{\frac{{12\mu _{1i}^2}}{{\sigma _{1i}^2}}}}} \right)
\label{m49}
\end{equation}

The third term of ${T_2}$  is
\begin{align}
\begin{gathered}
   - \int\limits_{ - \infty }^\infty  {{P_{{\varphi _i}}}{e^{ - \frac{{2{q_i}{\mu _{1i}}}}{{\sigma _{1i}^2}}}}} d{q_i} \hfill \\
   =  - \frac{1}{{\sqrt {2\pi } \left| {{\sigma _{1i}}} \right|}}\int\limits_{ - \infty }^\infty  {\left( {{e^{ - \frac{{{{({q_i} - {\mu _{1i}})}^2}}}{{2\sigma _{1i}^2}}}}{e^{ - \frac{{2{q_i}{\mu _{2i}}}}{{\sigma _{2i}^2}}}} + {e^{ - \frac{{{{({q_i} + {\mu _{1i}})}^2}}}{{2\sigma _{1i}^2}}}}{e^{ - \frac{{2{q_i}{\mu _{2i}}}}{{\sigma _{2i}^2}}}}} \right)} d{q_i} \hfill \\
\end{gathered} 
\label{m50}
\end{align}
which can be written in the form 
\begin{align}
\begin{gathered}
  - \int\limits_{ - \infty }^\infty  {{P_{{\varphi _i}}}{e^{ - \frac{{2{q_i}{\mu _{1i}}}}{{\sigma _{1i}^2}}}}} d{q_i} \hfill \\
   =  - \left( {{e^{ - \frac{{\mu _{1i}^2 - \rho _1^2}}{{2\sigma _{1i}^2}}}}\int\limits_{ - \infty }^\infty  {\frac{1}{{\sqrt {2\pi } \left| {{\sigma _{1i}}} \right|}}{e^{ - \frac{{{{({q_i} - {\rho _1})}^2}}}{{2\sigma _{1i}^2}}}}d{q_i}} } \right. \hfill \\
  \left. { + {e^{ - \frac{{\mu _{1i}^2 - \rho _2^2}}{{2\sigma _{1i}^2}}}}\int\limits_{ - \infty }^\infty  {\frac{1}{{\sqrt {2\pi } \left| {{\sigma _{1i}}} \right|}}{e^{ - \frac{{{{({q_i} - {\rho _2})}^2}}}{{2\sigma _{1i}^2}}}}d{q_i}} } \right) \hfill \\ 
\end{gathered} 
\label{m51}
\end{align}
where ${\rho _1} = ({\mu _{1i}} - 2{\mu _{2i}}\sigma _{1i}^2\sigma _{2i}^{ - 2})$ and ${\rho _2} = ({\mu _{1i}} + 2{\mu _{2i}}\sigma _{1i}^2\sigma _{2i}^{ - 2})$ which becomes
\begin{align}
\begin{gathered}
- \int\limits_{ - \infty }^\infty  {{P_{{\varphi _i}}}{e^{ - \frac{{2{q_i}{\mu _{1i}}}}{{\sigma _{1i}^2}}}}} d{q_i} =  - \left( {{e^{ - \frac{{\mu _{1i}^2 - \rho _1^2}}{{2\sigma _{1i}^2}}}} + {e^{ - \frac{{\mu _{1i}^2 - \rho _2^2}}{{2\sigma _{1i}^2}}}}} \right)
\end{gathered} 
\label{m52}
\end{align}

Similarly, the last term of ${T_2}$ can be simplified as 
\begin{align}
\begin{gathered}
\frac{1}{2}\int\limits_{ - \infty }^\infty  {{P_{{\varphi _i}}}{e^{ - \frac{{2{q_i}{\mu _{1i}}}}{{\sigma _{1i}^2}}}}} d{q_i} = \frac{1}{2}\left( {{e^{ - \frac{{\mu _{1i}^2 - \rho _3^2}}{{2\sigma _{1i}^2}}}} + {e^{ - \frac{{\mu _{1i}^2 - \rho _4^2}}{{2\sigma _{1i}^2}}}}} \right)
\end{gathered} 
\label{m53}
\end{align}
where ${\rho _3} = ({\mu _{1i}} - 4{\mu _{2i}}\sigma _{1i}^2\sigma _{2i}^{ - 2})\,$ and ${\rho _4} = ({\mu _{1i}} + 4{\mu _{2i}}\sigma _{1i}^2\sigma _{2i}^{ - 2})\,$. Adding \eqref{m47}, \eqref{m49}, \eqref{m52} and \eqref{m53}, ${T_2}$ can be written as
\begin{align}
\begin{gathered}
  {T_2} \approx {e^{ - \frac{{\mu _{1i}^2}}{{2\sigma _{1i}^2}}}}\left( {\frac{1}{2}\left( {{e^{\frac{{\rho _3^2}}{{2\sigma _{1i}^2}}}} + {e^{\frac{{\rho _4^2}}{{2\sigma _{1i}^2}}}}} \right) - \left( {{e^{\frac{{\rho _1^2}}{{2\sigma _{1i}^2}}}} + {e^{\frac{{\rho _2^2}}{{2\sigma _{1i}^2}}}}} \right)} \right) \hfill \\
   + 1 + \frac{1}{2}{e^{\frac{{4\mu _{1i}^2}}{{\sigma _{1i}^2}}}}\left( {1 - {e^{\frac{{8\mu _{1i}^2}}{{\sigma _{1i}^2}}}}} \right) \hfill \\ 
\end{gathered} 
\label{m54}
\end{align}

Now, using \eqref{m43}-\eqref{m44} and \eqref{m54}, one gets \eqref{m6}. This completes the proof. 

\bibliographystyle{ieeetr}
\bibliography{reference}

\addtolength{\textheight}{-3cm}
\begin{IEEEbiography}[{\includegraphics[width=1in,height=1.25in,clip,keepaspectratio]{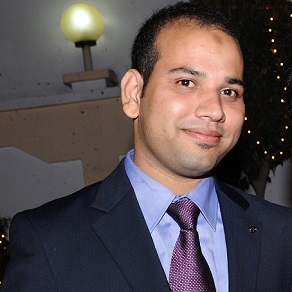}}]{{\bf Aquib Mustafa}}
 (S’17) received the B. Tech. degree from the Aligarh Muslim University, Aligarh, India, in 2013, and the Master’s degree from the Indian Institute of Technology Kanpur, Kanpur, India, in 2016. He is currently pursuing the Ph.D. degree in the Department of Mechanical Engineering, Michigan State University, East Lansing, USA. His primary research interests include Resilient control, Multi-agent systems, and sensor networks.
\end{IEEEbiography}

\begin{IEEEbiography}[{\includegraphics[width=1in,height=1.25in,clip,keepaspectratio]{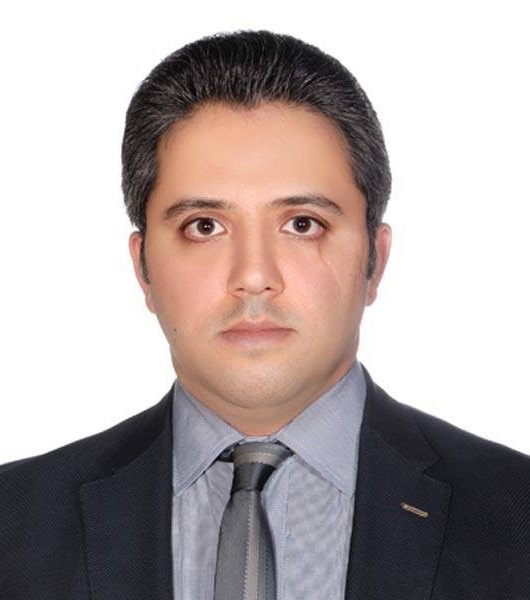}}]{{\bf Rohollah Moghadam}}
 (S’17) received the M.S. degree from the Shahrood University of Technology, Shahrood, Iran, in 2007, in electrical engineering. He was a visiting scholar with the University of Texas at Arlington Research Institute, Fort Worth, TX, USA in 2016. He is currently pursuing the Ph.D. degree in the Department of Electrical and Computer Engineering, Missouri University of Science and Technology, Rolla, USA.  His current research interests include cyber-physical systems, reinforcement learning, neural network, network control systems, and distributed control of multi-agent systems. \end{IEEEbiography}

\begin{IEEEbiography}[{\includegraphics[width=1in,height=1.25in,clip,keepaspectratio]{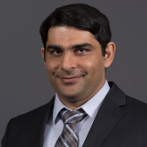}}]{{\bf Hamidreza Modares}}
(M’15) received the B.Sc. degree from Tehran University, Tehran, Iran, in 2004, the M.Sc. degree from the Shahrood University of Technology, Shahrood, Iran, in 2006, and the Ph.D. degree from the University of Texas at Arlington (UTA), Arlington, TX, USA, in 2015. From 2006 to 2009, he was with the Shahrood University of Technology as a Senior Lecturer. From 2015 to 2016, he was a Faculty Research Associate with UTA. From 2016 to 2018, he was an Assistant Professor with the Department of Electrical and Computer Engineering, Missouri University of Science and Technology, Rolla, USA. 

He is currently an Assistant Professor with the Department of Mechanical Engineering, Michigan State University, East Lansing, USA. He has authored several journal and conference papers on the design of optimal controllers using reinforcement learning. His current research interests include cyber-physical systems, machine learning, distributed control, robotics, and renewable energy microgrids.

Dr. Modares was a recipient of the Best Paper Award from the 2015 IEEE International Symposium on Resilient Control Systems, the Stelmakh Outstanding Student Research Award from the Department of Electrical Engineering, UTA, in 2015, and the Summer Dissertation Fellowship from UTA, in 2015. He is an Associate Editor of the IEEE TRANSACTIONS ON NEURAL NETWORKS AND LEARNING SYSTEMS.
\end{IEEEbiography}

\end{document}